\newcommand{\poly}{{\rm poly}}
\newtheorem{definition}{Definition}
\newtheorem{remark}{Remark}
\newtheorem{theorem}{Theorem}
\newtheorem{corollary}{Corollary}
\newtheorem{lemma}{Lemma}
\newtheorem{proposition}{Proposition}
\newtheoremstyle{restate}{}{}{\itshape}{}{\bfseries}{~(restated).}{.5em}{\thmnote{#3}}
\theoremstyle{restate}
\newcommand{\E}{\mathop{\mathbb{E}}}
\newcommand{\bOne}{\mathbf{1}}
\newcommand{\bE}{\mathbf{e}}
\newcommand{\cB}{\mathcal{B}}
\newcommand{\X}{\overline{X}}
\newcommand{\Huff}{\mathsf{Huff}}
\newcommand{\Dic}{\mathsf{Dictionary}_{\Sigma,n}}
\newcommand{\LDSC}{\mathsf{LDSC}}
\newcommand{\Pat}{P\v{a}tra\c{s}cu }
\newcommand{\probarg}[2]{$\mathsf{#1}_{#2}$}
\newcommand{\eps}{\varepsilon}
\newcommand{\cP}{\mathcal{P}}
\newcommand{\cN}{\mathcal{N}}
\newcommand{\cQ}{\mathcal{Q}}
\newcommand{\cT}{\mathcal{T}}
\newcommand{\DPT}{\texttt{DPT}}
\begin{document}

\title{How to Store a Random Walk} 

\author{Emanuele Viola\thanks{Northeastern University.
Email: \texttt{viola@ccs.neu.edu}. Supported by NSF CCF award 1813930.} 
\and 
{Omri Weinstein\thanks{Columbia University. Email: \texttt{omri@cs.columbia.edu}. 
Research supported by NSF CAREER award CCF-1844887.}}
\and 
{Huacheng Yu\thanks{Harvard University. Email: \texttt{yuhch123@gmail.com}. Supported in part by ONR grant N00014-15-1-2388, a Simons Investigator Award and NSF Award CCF 1715187. This work was initiated during a visit of the authors to the Simons Institute.}}
}
\date{}

\setcounter{page}{0}
\maketitle
\thispagestyle{empty}

\maketitle

\abstract{

Motivated by storage applications, we study the following data structure problem:  
An encoder wishes to store a collection of jointly-distributed files 
$\X:=(X_1,X_2,\ldots, X_n) \sim \mu$ which are \emph{correlated} 
($H_\mu(\overline{X}) \ll \sum_i H_\mu(X_i)$), using as little (expected) memory 
as possible, such that each individual file $X_i$ can be recovered quickly with 
few (ideally constant) memory accesses. 

In the case of independent random files, a dramatic result by \Pat (FOCS'08) and 
subsequently by Dodis, \Pat and Thorup (STOC'10) shows that it is possible to store $\overline{X}$
using just a \emph{constant} number of extra bits beyond the information-theoretic minimum space, 
while at the same time decoding each $X_i$ in constant time.  However, in the (realistic) case 
where the files are correlated, much weaker results are known, requiring at least $\Omega(n/\poly\lg n)$ 
extra bits for constant decoding time, even for ``simple" joint distributions $\mu$. 

We focus on the natural case of compressing \emph{Markov chains}, i.e., storing a length-$n$ random 
walk on any (possibly directed) graph $G$. Denoting by $\kappa(G,n)$ the number of  
length-$n$ walks on $G$, we show that there is a succinct data structure storing a random walk 
using $\lg_2 \kappa(G,n) + O(\lg n)$ bits of space, 
such that any vertex along the walk can be decoded in $O(1)$ time on a word-RAM. 
If the graph is strongly connected (e.g., undirected), the space can be improved to only 
$\lg_2 \kappa(G,n) + 5$ extra bits.  For the harder task of matching the \emph{point-wise} optimal 
space of the walk, i.e., the empirical entropy $\sum_{i=1}^{n-1} \lg (deg(v_i))$, we present a data structure 
with $O(1)$ extra bits at the price of $O(\lg n)$ decoding time, and show that 
any improvement on this would lead to an improved solution on the long-standing 
Dictionary problem. All of our data structures  support the \emph{online} version of 
the problem with constant update and query time. 

}

\newpage

\section{Introduction} 

Consider the following 
information-retrieval problem: an encoder (say, a Dropbox server) 
receives a large collection of jointly-distributed files $\overline{X}:=(X_1,X_2,\ldots, X_n) \sim \mu$ which are 
highly \emph{correlated}, i.e.,\footnote{$H_\mu(X)$ denotes the Shannon entropy of $X\sim \mu$}   
\begin{equation} \label{eq_correlation}
H_\mu(\overline{X}) \ll \sum_i H_\mu(X_i),  
\end{equation} 
and needs to preprocess $\overline{X}$ into as little (expected) memory as possible, such that 
each individual file $X_i$ can be 
retrieved quickly using few (ideally constant) memory accesses. 
This data structure problem 
has two naiive solutions: The first one is to compress the entire collection using Huffman 
(or Arithmetic) coding, achieving essentially optimal space $s = H_\mu(\overline{X}) +1$ bits  
(in expectation), but such entropy-coding schemes generally require decompressing the \emph{entire} 
file even if only a single file $X_i$ needs to be retrieved. 
The other extreme is to compress each file separately, achieving (highly) suboptimal space 
$s\sim \sum_i H_\mu(X_i) + n \gg H_\mu(\overline{X})$ at the upshot of constant decoding time $t=O(1)$.
Is it possible to get the best of both worlds (time and space), i.e., a \emph{locally-decodable} data compression scheme?  
\footnote{Of course, 
one could consider a combination of the two solutions by dividing the files into blocks of size 
$\sim t$, and compressing each one optimally, but this method can in general be arbitrarily lossy in 
entropy (space) when the blocks are correlated.}

\vspace{0.05in}

This problem is motivated by large-scale storage applications. 
The proliferation in digital data being uploaded and analyzed on 
remote servers is posing a real challenge of scalability in modern storage systems.  
This challenge is incurred, in part, by the redundancy of maintaining very large yet low-entropy datasets.    
At the same time, in many modern storage applications such as genome sequencing and analysis, 
real-time financial trading, image processing etc., databases are no longer merely serving archival purposes --  
data needs to be continually accessed and manipulated for training, prediction and real-time statistical 
decision making \cite{TBW18, HBBCDDF18, correlationtimesseries, THOW16, Autocorr2000}.  
This inherent tension between \emph{compression and search}, i.e.,  
the need to perform local computations and search over the compressed data 
itself without first decompressing the dataset, 
has motivated the design of compressions schemes which provide \emph{random access} 
to individual datapoints, at a small compromise of the compression rate, giving rise to the notion of 
\emph{locally-decodable source coding} (LDSC) \cite{Pat08, DLRR13, MHMP15, Mazumdar_LDSC, 
TBW18}. 
Local decodability is also a crucial aspect in distributed file systems, where the energy cost of 
random-accessing a memory address is typically much higher than that of sending its actual content,  
especially in SSD hardware \cite{Agrawal_SSD08}. 

There is a long line of work successfully addressing the LDSC problem 
in \emph{i.i.d} or nearly-i.i.d settings (i.e., when \eqref{eq_correlation} holds with equality or  
up to $\pm o(n)$ additive factor  \cite{Pagh02, Buhrman02, GolynskiRR08, Pat08, DPT10, Mazumdar_LDSC, 
MHMP15, TBW18, BN13}), where \emph{succinct} data structures are possible. 
In contrast, the \emph{correlated} case \eqref{eq_correlation} is much less 
understood (more on this in Sections \ref{sec_related_work} and \ref{subsec_completeness_LDSC}). 
Clearly, the aforementioned 
tradeoff between compression and search is all the more challenging (yet appealing) when the dataset 
exhibits many similarities/correlations (i.e., $H_\mu(\overline{X}) \ll \sum_i H_\mu(X_i)$), and this is indeed 
the reality of a large portion of digital media \cite{WSYZW16}. 
Once again, joint-compression of the entire dataset is information-theoretically superior 
in terms of minimizing storage space, but at the same time, global compression, by nature, extinguishes 
any ``locality" in the original data, making it useless for random-access applications.   

\vspace{0.05in}

Unfortunately, a  simple observation (see Proposition \ref{prop_completeness_ldsc}) 
shows that the general LDSC problem in \eqref{eq_correlation} is a ``complete" static data structure 
problem (in the cell-probe model), in the sense that any data structure problem $\cP$ (with an 
\emph{arbitrary} set of $n$ database 
queries) can be ``embedded" as an LDSC problem on \emph{some} joint distribution $\mu = \mu(\cP)$. 
This observation implies that locally-decodable data compression is generally impossible, 
namely, for \emph{most} low-entropy distributions $\mu$ on $n$ files ($H_\mu := H_\mu(\overline{X}) \ll n$), 
any data structure  requires either near-trivial storage space $s \gtrsim n^{1-o(1)}$ or 
decoding time $t\geq (H_\mu)^{1- o(1)}$ for each file (An \emph{explicit} hard example is the family 
of $(H_\mu)$-wise independent distributions, for which our reduction implies $t\geq \Omega(\lg H_\mu)$ 
retrieval time unless super-linear $s= \omega(H_\mu)$ storage space is used).  
These unconditional impossibility results 
naturally raise the question: 
\begin{quote}
Which statistical properties of low-entropy distributions $\mu$ 
facilitate an efficient $\mathsf{LDSC}$ 
scheme with space $O(H_\mu(\overline{X}))$, in the word-RAM model? 
\end{quote}  

Perhaps the most basic family of low-entropy joint distributions is that of 
\emph{Markov chains}, which models ``time-decaying" correlations in discrete-time sequences. 
Ergodicity is a standard statistical model in 
many realistic datasets (e.g., financial time-series, DNA sequences and weather 
forecasting to mention a few \cite{THOW16, Autocorr2000,Gales07}). 
We study the LDSC question on Markov chains, captured by the problem of storing 
a length-$n$ \emph{random walk} on (directed or undirected) graphs of finite size. 
Our data strucutres greatly improve on previously known time-space tradeoffs for Markov chains 
obtained via ``universal compression" (e.g. \cite{DLRR13, TBW18}, see elaboration in 
Section \ref{sec_related_work}), and also efficiently support the dynamic version of the problem 
where the walk evolves step-by-step in an online fashion.  
Throughout the paper, all logarithms are base $2$ unless otherwise stated.  
\subsection{Main Results} 

To build intuition for the problem, let $G$ be an undirected $d$-regular graph of finite 
(but arbitrarily large constant) size, 
and consider a random walk $W=(V_0,V_1,V_2,\ldots, V_n)$ on $G$ starting from a random 
vertex (for simplicity), where $V_i \in [|G|]$ denotes the $i$th vertex $v_i$ visited in the walk. 
Excluding the first vertex $V_0$ for convenience of notation, the joint entropy 
of $W$ is clearly $H(W) = \lg|G|+n\lg d$, whereas the sum-of-marginal entropies is 
\begin{align}\label{eq_RW_entropy}
\sum_{i=0}^{n} H(V_i) = \; (n+1)\lg |G| \; \gg \; n \lg d  
\end{align}
since each $V_i$ is marginally uniform (as $G$ was assumed to be regular and 
we started from a uniform vertex).  This simple instance already captures an interesting case of the 
LDSC problem: The information-theoretic minimum space for storing the walk ($\lg|G| + n\lg d$ bits) 
can be achieved by storing for each vertex $v_i$ in the walk the next outgoing edge
(when $d$ is not a power of $2$ we can use arithmetic coding to compress the entire sequence jointly 
with $1$ bit loss in space).  
In either case, such encoding does not enable random access: retrieving $v_i$ requires ``unfolding" the walk 
all the way to the start, taking $t\sim \Omega(n)$ time in the worst case (by preprocessing how to unfold consecutive 
$\lg n$ steps, the decoding time can be improved to $\Omega(n/\lg n)$). The other extreme is to store 
the walk explicitly (storing $V_i$ using $\lceil \lg |G| \rceil$ bits), facilitating $O(1)$ decoding time, at the price 
of highly redundant space (at least the LHS of \eqref{eq_RW_entropy}). Of course, it  is possible to combine the 
two approaches and compress each $t$-step subwalk optimally, losing at most $\sim \lg (|G|/d)$ bits per block, 
so that decoding time is $O(t/w)$ and storage is suboptimal by an additive term of $r=O_{|G|}(n/t)$ bits, where $r$ 
is the \emph{redundancy} of the data structure. This linear tradeoff implies, for example, that if 
we wish to decode each vertex in constant time, the redundancy of the data structure is $r \sim \frac{n\lg(|G|/d)}{\lg n}$ 
on a word-RAM with word-size $w = \Theta(\lg n)$. 

We show that at the price of $r=3$ extra bits of redundancy, each vertex in the walk can 
be decoded in constant time, when the underlying graph is $d$-regular for any $d$ 
(not necessarily a power of $2$): 

\begin{theorem}[Informal]\label{thm_regular_informal}
	Given a walk $(v_0,\ldots,v_n)$ in $G$, there is a succinct \emph{cell-probe} data structure using 
	$\lg |G| +n\lg d + 3$ \emph{bits} of memory, supporting retrieval of any vertex $v_i$ ($i\in [n]$)
	in constant time, assuming word-size $w\geq \Omega(\lg n)$. Moreover, the data structure can be 
	implemented on a word RAM, assuming a precomputed\footnote{This lookup table only contains 
	precomputed information about the graph $G$ and does not depend
	on the input walk $(v_0,\ldots,v_n)$.
	If we are storing multiple walks, it can be shared across instances, hence has a small 
	``amortized" cost.} $\ell$-bit look-up table, supporting vertex retrievals in $O(\frac{\lg\lg n}{\lg\lg \ell})$ time. 
\end{theorem}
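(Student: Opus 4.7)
The plan is to build on the succinct encoding paradigm of \Pat (FOCS'08) and Dodis--\Pat--Thorup (STOC'10), combined with a graph-specific precomputed table. Since $G$ is $d$-regular, the walk $(v_0,v_1,\ldots,v_n)$ is uniquely determined by the pair $(v_0,e_1,\ldots,e_n)\in[|G|]\times[d]^n$, because at each step $v_i$ is fixed by $v_{i-1}$ and the edge index $e_i$. There are exactly $|G|\cdot d^n$ such pairs, so the walk can be represented as a single integer $N\in[|G|\cdot d^n]$, taking $\lceil\lg(|G|\cdot d^n)\rceil\le\lg|G|+n\lg d+1$ bits. Applying the \emph{Succincter} scheme (or the DPT spillover trick) to the mixed-radix representation of $N$ yields $O(1)$-time access to each of its ``digits''---namely $v_0$ and each $e_i$---while adding only a constant number of bits of redundancy.

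For $O(1)$-time retrieval of a whole \emph{vertex} $v_i$ rather than a single edge, I would use a graph-specific precomputed lookup table $T[v,s]$ of polynomial size, indexed by a vertex $v\in[|G|]$ and a word $s$ packing $B=\Theta(\lg n/\lg d)$ consecutive edge indices; the entry $T[v,s]$ stores the endpoint of the length-$B$ walk starting at $v$ along $s$. Since $|G|$ and $d$ are constants and $w=\Omega(\lg n)$, the word-RAM packs $B$ edges into one word and one table lookup fast-forwards $B$ steps. In the cell-probe model this computation is free; in the word-RAM statement the table is the $\ell$-bit precomputed look-up table, and the $O(\lg\lg n/\lg\lg\ell)$ factor arises from running the Succincter digit-access primitive with an $\ell$-bit auxiliary table.

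To reach an arbitrary position $i$ in constant time without simulating from $v_0$, I would fold implicit checkpoints into the encoding itself. Partition the walk into blocks of length $B$, and choose the bijection between valid walks and integers $N\in[|G|\cdot d^n]$ so that for every $j$ the checkpoint vertex $v_{jB}$ appears as a distinguished digit of $N$ at a known position, sitting next to the block of edges $(e_{jB+1},\ldots,e_{(j+1)B})$. Because each $v_{jB}$ is a deterministic function of $(v_0,e_1,\ldots,e_n)$, surfacing it as a digit does not add any entropy---the number of valid walks remains $|G|\cdot d^n$, and the total storage remains $\lg|G|+n\lg d+O(1)$ bits. The query for $v_i$ then reads: compute $j=\lfloor i/B\rfloor$; fetch $v_{jB}$ in $O(1)$ probes; fetch the $r=i-jB\le B$ subsequent edges in $O(1)$ probes; and apply one lookup in $T$ to obtain $v_i$.

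The main obstacle I expect is the bookkeeping required to show that all the ceiling losses across block boundaries collapse into only $+3$ extra bits. A naive per-block encoding would incur one ceiling loss per block, yielding $\Theta(n/\lg n)$ extra bits in total; avoiding this requires the DPT spillover trick applied hierarchically, so that the sub-bit residue left over from each block is combined into a single shared remainder of $O(1)$ bits. Verifying that the resulting representation still supports $O(1)$-time access to both edge-digits and checkpoint-digits while keeping the spillover accounting to exactly three extra bits should be the most delicate part of the proof.
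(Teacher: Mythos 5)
There is a genuine gap at the heart of your third paragraph: the claim that the checkpoint vertices $v_{jB}$ can be ``surfaced as distinguished digits at known positions'' of a single integer $N\in[|G|\cdot d^n]$ without paying for them. A mixed-radix / \DPT-style representation with $O(1)$-time digit access charges $\lceil\lg r\rceil$ bits (amortized, $\lg r$ plus spillover) for a digit of radix $r$, independently of whether that digit is information-theoretically determined by the others. If you add $m=\Theta(n/\lg n)$ checkpoint digits of radix $|G|$ on top of the $n$ edge digits of radix $d$, the stored object ranges over a product set of size $|G|^{m+1}d^{n}$, and the space becomes $\lg|G|+n\lg d+\Theta(n\lg|G|/\lg n)$ --- exactly the naive blocking redundancy the theorem is supposed to beat. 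To recoup the $\lg|G|$ bits per checkpoint you must encode each block of $B$ edges \emph{conditionally} on its two endpoints, i.e.\ as an index into the set of length-$B$ walks from $v_{jB}$ to $v_{(j+1)B}$, whose size is $\cN_B(u,v)=\bE_u^{\top}A^B\bE_v$. For this to fit a fixed, endpoint-independent radix (which you need for ``known positions'' and for the space accounting to telescope), you need the fact that $\cN_B(u,v)=(1/|G|\pm n^{-2})\,d^{B}$ for $B=\Theta(\lg n)$ --- a spectral mixing estimate that holds only for connected, non-bipartite regular $G$ and is the central lemma of the actual argument. Your proposal never invokes any property of $G$ beyond regularity, and asserting that ``the number of valid walks remains $|G|\cdot d^n$'' conflates the global count with the per-position radices that a digit-accessible encoding actually requires.

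A secondary issue: once a block is stored as a single conditional index $K\in[\approx d^{B}/|G|]$ rather than as a packed string of raw edge labels, your table $T[v,s]$ no longer applies --- extracting $v_i$ from $K$ on a word RAM is itself nontrivial, and is where the $O(\lg\lg n/\lg\lg\ell)$ bound comes from (a $B$-way divide-and-conquer on the walk with precomputed counts $\cN_{l'}(x,y)$ and a constant-time predecessor search over monotone gaps). Your spillover intuition for collapsing the per-block ceiling losses into $O(1)$ bits is the right instinct (the paper achieves it with two \DPT{} instances, one for milestones and one for the conditional block indices, each contributing one ceiling), but without the mixing lemma the $+3$ accounting cannot even get started.
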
  

Dealing with general (directed, non-regular) graphs is fundamentally more challenging, the main reason 
being that: (a) the stationary distribution of non-regular graphs is not uniform; (b) the number of sub-walks 
between any two vertices $(u,v)$ is not fixed when the graph is 
non-regular or directed, i.e., the distribution on sub-walks is again nonuniform. (We elaborate 
why uniformity is crucial for succinct solutions and  
how we resolve this challenge in the following Section \ref{sec_tech_overview}). 
Intuitively, this means that if we want our space to (roughly) match the entropy of the walk,  
then for non-regular graphs \emph{variable-length} coding must be used, making decoding in constant-time 
very challenging (as the memory locations corresponding to substrings become unpredictable). 

To this end, it is natural to consider two space benchmarks for general graphs. The first benchmark is the 
\emph{point-wise} optimal space for storing the walk $v = (v_0,\ldots, v_n)$, i.e., its 
empirical entropy 
\begin{align} \label{eq_pointwise_benchmark}
\lg |G|+\sum_{i=0}^{n-1} \lg (\deg_{_G}(v_i)),
\end{align}
which is clearly the information-theoretic minimum space (as the expectation of this term over a 
random walk $V$ is precisely the Shannon entropy of the walk). A somewhat more modest 
\emph{worst-case} space benchmark (depending on how ``non-regular" the graph $G$ is), is 
\begin{align} \label{eq_wc_benchmark}
\lg \kappa(G,n) = \lg (\mathbf{1}^{\top} {A_G}^n \mathbf{1}) 
\end{align}
where $\kappa(G,n)$ is the number of length-$n$ walks on $G$ and $A_G$ is the adjacency matrix of $G$. 
Note that for regular graphs the two benchmarks \eqref{eq_pointwise_benchmark} and \eqref{eq_wc_benchmark} 
are the same, hence Theorem \ref{thm_regular_informal} is best possible. 

Our first result for general graphs, building on the work of \cite{Pat08}, is a data structure with 
$O(1)$ extra bits beyond the point-wise optimal space, and $O(\lg n)$ decoding 
time on a word-RAM.  

\begin{theorem}[Point-wise compression for general graphs, Informal] \label{thm_pointwise_general_informal}
	Given a length-$n$ walk $(v_0,\ldots,v_n)$ on any directed graph $G$, 
 	there is a succinct data structure that uses $\lg_2|G|+\sum_{i=0}^{n-1} \lg (\deg(v_i)) + O(1)$ bits of space, 
      and retrieves any vertex $v_i$ in $O(\lg n)$ time on a word-RAM, assuming 
      a precomputed\footnote{Once again, this lookup table is independent of the walk and only contains 
      precomputed information about $G$ for decoding, hence its ``amortized" space cost is small as 
      it is shared across instances.} lookup-table of size $\ell = \tilde{O}(n^6)$. 
\end{theorem}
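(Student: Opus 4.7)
The plan is to build on the succinct-dictionary machinery of \Pat (FOCS'08) and Dodis-\Pat-Thorup (STOC'10), adapting it to the variable-length encoding naturally induced by the empirical entropy of a walk. At the encoding level, starting from $v_0$ (stored in $\lceil \lg |G| \rceil$ bits), I would encode each step by recording the chosen outgoing edge of $v_i$ with an arithmetic-style code of approximately $\lg \deg(v_i)$ bits, so that the total payload matches $\sum_{i=0}^{n-1}\lg \deg(v_i)$ up to $O(1)$ bits over the whole walk. The content of each step is easy to \emph{decode locally} given (a) the current vertex and (b) the corresponding bits of the arithmetic code, via a precomputed table indexed by the pair $(v,\text{codeword chunk})$; taking chunks of $\Theta(\lg n)$ bits gives a lookup table of size $\tilde O(n^{O(1)})$, which can be set to $\tilde O(n^6)$ with the right chunk sizes.

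For random access, the plan is to partition the walk into consecutive blocks of $B = \Theta(\lg n)$ steps and build a balanced binary tree over the blocks. The key quantity stored at each internal node is the total bit-length of its subtree (equivalently, the prefix sum $\sum_{j<i} \lg \deg(v_j)$ at block boundaries); with these prefix sums, a query for $v_i$ descends from the root in $O(\lg n)$ node-visits to find the bit-offset of the enclosing block, reads the block into $O(1)$ machine words, and then replays the walk from the block's first vertex to position $i$ using the lookup table. The block's first vertex itself can be stored as a $\lg |G|$-bit checkpoint per block, absorbed into the $O(1)$ per-tree-node spillover. Retrieval is therefore dominated by the $O(\lg n)$ tree descent.

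The crux, and the main obstacle, is reducing the tree's \emph{overhead} from the naive $\Theta((n/B)\lg n) = \Theta(n)$ bits down to $O(1)$. This is precisely the regime where Pătrașcu's \emph{succincter} / aB-tree technique is required: rather than storing the left-subtree bit-length explicitly at each node, one stores the \emph{deviation} from an information-theoretic estimate of that length into a shared spillover slot that is co-encoded with the parent. Propagating such spillovers up the tree makes the redundancies telescope, and a careful accounting (of the fact that each $\lg \deg(v_i) \leq \lg |G| = O(1)$, so block lengths cluster tightly around their expectation) yields $O(1)$ total redundancy while still allowing each level of the tree to be traversed in constant time. Adapting this machinery to the walk setting (where ``block lengths'' are data-dependent sums $\sum_{j\in\text{block}}\lg \deg(v_j)$ rather than the fixed-alphabet lengths Pătrașcu considers) and verifying that the decoding table of size $\tilde O(n^6)$ suffices to resolve each block in $O(1)$ work per step is the technical heart of the argument.
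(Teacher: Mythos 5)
Your high-level plan is the right one---the paper also goes through P\v{a}tra\c{s}cu's aB-tree machinery (Theorem~8 of \cite{Pat08}, restated in the paper as Theorem~\ref{thm_abtree})---but you stop exactly at the step the paper devotes its only lemma to. You call it ``the technical heart of the argument'' and then leave it as an assertion, and that heart is the whole game: the aB-tree theorem gives redundancy $O(1)$ only over $\lg_2\cN(n,\varphi)$, where $\cN(n,\varphi)$ is the number of inputs whose root label is $\varphi$. To match the point-wise benchmark you must additionally prove $\lg_2\cN(n,\varphi)\leq \lg|G|+\sum_{i}\lg\deg(v_i)+O(1)$, and nothing in your proposal establishes this. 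The paper labels each subwalk $(v_\ell,\ldots,v_r)$ with the triple $(v_\ell,v_r,S)$ where $S=\sum_{i=\ell}^{r-1}\lceil n\lg\deg(v_i)\rceil$ (so $|\Phi|=O(n^2)$, which is exactly where the $\tilde O(n^6)$ lookup table size comes from at $B=2$), checks that the internal label is a function of the children's labels, and then proves the counting bound by an entropy argument: letting $V$ be uniform over walks with root label $\varphi=S$, so $H(V)=\lg_2\cN(n+1,\varphi)$, the chain rule plus the data-processing/conditioning inequality gives $H(V)\leq H(V_0)+\sum_i H(V_{i+1}\mid V_i)\leq \lg|G|+\sum_i\E[\lg\deg(V_i)]\leq\lg|G|+S/n$. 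Without this step there is no reason the ``spillovers telescope'' to $O(1)$.

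A second, related imprecision: your picture---an explicit $\lg|G|$-bit checkpoint per block, plus a separate tree of explicit bit-length prefix sums, with the overhead ``absorbed into a shared spillover slot''---does not close the accounting as stated. There are $\Theta(n/\lg n)$ blocks, so explicit checkpoints alone cost $\Theta(n/\lg n)$ bits, and you offer no argument that the spillover mechanism recovers precisely that much. In the aB-tree formulation, nothing is stored ``explicitly'' at a node; the subtree is compressed to $\lg_2$(number of consistent inputs given its label) plus a fractional spillover, and the boundary vertices are part of the label rather than a separate checkpoint. The reason the checkpoints come out free is exactly the counting lemma above: fixing the endpoints and the cumulative $\lceil n\lg\deg\rceil$ cost reduces the count of consistent subwalks by the same factor that the checkpoint costs. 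So you have identified the correct tool and correctly diagnosed where the difficulty lives, but the proposal leaves the one step that turns ``P\v{a}tra\c{s}cu's machinery should apply'' into ``the space is $\lg|G|+\sum_i\lg\deg(v_i)+O(1)$'' completely unproven.
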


Nevertheless, we show that if one is willing to match the \emph{worst-case} space benchmark 
\eqref{eq_wc_benchmark}, then \emph{constant} query time is possible for any graph $G$.  
This is our main result  and the technical centerpiece of the paper.  We state this result in its most general form.

\begin{theorem}[Worst-case compression for general graphs, Informal] \label{thm_wc_general_informal}
	Given a length-$n$ walk 
	on any strongly connected directed graph $G$, 
      there is a data structure that uses $\lg (\mathbf{1}^{\top} A^n \mathbf{1})+O(1)$ bits of space,    
      and retrieves any vertex $v_i$ in $O(\frac{\lg\lg n}{\lg\lg \ell})$ time on a word-RAM, assuming 
      a precomputed lookup table of size $\ell$. 
      For general directed 
      graphs, the same holds albeit with $O(\lg n)$ bits of redundancy.
\end{theorem}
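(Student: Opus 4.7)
My plan is to build a recursive lexicographic encoding of length-$n$ walks as a single integer in $[0,\kappa(G,n))$, then decode any $v_i$ in $O(\lg\lg n/\lg\lg\ell)$ time by navigating the encoding hierarchically, and finally use a \cite{DPT10}-style spillover representation driven by the Perron--Frobenius structure of strongly connected $G$ to hold the total redundancy at $O(1)$ bits.

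\emph{Encoding and random access.} For $t\le n$ and $u,v\in V(G)$, set $N_{u,v}(t):=(A^t)_{u,v}$, and assign each walk $(v_0,\ldots,v_n)$ the lex index $I=\sum_{(u,w)\prec(v_0,v_n)} N_{u,w}(n)+J(v_0,\ldots,v_n)$, where $J\in[0,N_{v_0,v_n}(n))$ is defined recursively by splitting at the midpoint: $J=\sum_{w<v_{n/2}} N_{v_0,w}(n/2)\,N_{w,v_n}(n/2)+J_L\cdot N_{v_{n/2},v_n}(n/2)+J_R$. Unrolling gives a bijection onto $\{0,\ldots,\kappa(G,n)-1\}$, so $I$ fits in $\lg\kappa(G,n)+1$ bits. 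To recover $v_i$, I descend this tree: one lookup in a precomputed prefix-sum table extracts the midpoint vertex of the current sub-walk, and integer division/modulo yield $J_L,J_R$; a standard search extracts $(v_0,v_n)$ at the root. Naively the depth is $\lg n$; to reach $O(\lg\lg n/\lg\lg\ell)$, I raise the branching factor to $b=\ell^{\Theta(1)}$ and tabulate the joint prefix-sum structure on $b$-tuples of checkpoints (feasible because $|V|$ is a graph-dependent constant), so each level is processed in $O(1)$ word-RAM operations.

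\emph{Keeping redundancy $O(1)$.} Rounding each sub-index $J_L,J_R$ to an integer bit-length would cost $\Omega(1)$ per tree node, hence $\Omega(\lg n)$ overall. Instead I pack the sub-indices into a mixed-radix numeral using the DPT spillover trick \cite{DPT10}, which charges only $O(1)$ bits globally \emph{provided} the child counts at each level are within a bounded ratio. Strong connectivity supplies this: Perron--Frobenius yields $N_{u,v}(t)=\pi_u\pi'_v\lambda^t\bigl(1+O((\mu/\lambda)^t)\bigr)$ for left/right Perron vectors $\pi,\pi'$, spectral radius $\lambda$, and sub-dominant eigenvalue $\mu<\lambda$, so after a graph-dependent burn-in length all sub-counts agree up to a constant factor. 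Sub-walks shorter than the burn-in length I handle by brute-force tabulation inside the precomputed lookup table.

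The hard step will be quantifying the imbalance introduced by the $O((\mu/\lambda)^t)$ error across the hierarchy and showing that it does \emph{not} accumulate. I would do this with a potential argument charging each level's local imbalance to a term of the form $(\mu/\lambda)^{n/b^d}$ at depth $d$, which sums geometrically over the $\log_b n$ levels to $O(1)$. The general (not strongly connected) directed case loses this clean spectral bound --- different strongly connected components can dominate on different sub-walks --- so identifying which component governs each relevant level potentially costs $\lg n$ extra bits, which is precisely the source of the weaker $O(\lg n)$ redundancy claimed in the second part of the theorem.
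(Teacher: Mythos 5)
Your proposal has two genuine gaps, and they sit exactly where the paper's two main ideas live. First, the decoding time. You apply the midpoint divide-and-conquer encoding to the \emph{entire} length-$n$ walk, so the integers $I, J_L, J_R$ at the top levels of your tree have $\Theta(n)$ bits. The division, modulo, and prefix-sum comparisons you need at those levels are not $O(1)$-time operations on a word-RAM with $w=\Theta(\lg n)$, so the claimed $O(\lg\lg n/\lg\lg \ell)$ query time does not follow; even reading the relevant operands takes $\Theta(n/w)$ time. The paper's Lemma~\ref{lem_ram_decode} uses precisely your $B$-way lexicographic recursion (including the predecessor-search-over-monotone-gaps trick for large $B$), but \emph{only inside blocks of length $O(\lg n)$}, where every integer fits in $O(1)$ words; the global structure is instead handled by two \DPT{} dictionaries over $\Theta(n/\lg n)$ constant-word symbols.

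Second, the redundancy. Once you give up the single global bijection (which you must, for the reason above) and store per-block sub-indices, your condition that ``child counts at each level are within a bounded ratio'' is far too weak: if each of the $\Theta(n/\lg n)$ stored symbols lives in an alphabet whose effective size fluctuates by a constant factor depending on the data (here, on the milestone vertices), rounding to a common alphabet loses $\Theta(1)$ bits per symbol, i.e.\ $\Theta(n/\lg n)$ total. The missing idea is the paper's \emph{bundle} construction: for each milestone, the set of nearby subwalks is partitioned into $\approx s_x t_x \approx n^4\cdot\frac{\langle\sigma,\pi\rangle}{\langle\sigma,\bOne\rangle\langle\pi,\bOne\rangle}$ groups whose sizes agree up to a $1\pm O(n^{-2})$ factor \emph{uniformly over the milestone value}, so that the bundle indices form a (near-)uniform, (near-)independent string that \DPT{} stores with $o(1)$ loss per position. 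The Perron--Frobenius estimate is used to prove this $1\pm O(n^{-2})$ balance, not merely a bounded ratio. A smaller but real issue: strong connectivity alone does not give you a spectral gap ($\mu<\lambda$ fails for periodic graphs); the paper first reduces to the aperiodic case by passing to $G^p$ restricted to one residue class, and your burn-in argument needs that reduction too.
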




It is natural to ask whether constant-time decoding is possible even with respect to the point-wise space benchmark.
Our final result shows that any improvement on Theorem \ref{thm_pointwise_general_informal} 
would lead to an improvement on the long-standing \emph{succinct Dictionary} problem 
\cite{Pagh02, Pat08}: We present a succinct reduction from Dictionary 
to storing a random walk on some non-regular (directed) graph, under the technical restriction 
that all marginal symbol frequencies in the input dictionary string 
are powers of two  
\footnote{This seems a minor condition when dealing with \emph{arbitrary} (nonuniform) priors $\mu$, 
as the ``hard" part is variable-length coding.}. 

\begin{theorem}\label{thm_LB_informal}
Let  $D$ be a succinct cell-probe data structure for storing a walk $(v_1,\ldots,v_n)$ over general (directed) 
graphs $G$, using $\sum_i \lg(\deg(v_i)) + r$ bits of space, and query time $t$ for each $v_i$. 
Then for any constant-size alphabet $\Sigma$, there is a succinct dictionary storing $x\in \Sigma^n$, 
with space $H_0(x) + r$ bits and query time query time $t$,  where $H_0(x)$ is the \emph{zeroth-order} 
empirical entropy of $x$. This reduction holds for any input string  $x$ with 
empirical frequencies which are inverses of powers of $2$ lower bounded by a constant.  
\end{theorem}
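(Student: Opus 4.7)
The plan is to reduce the Dictionary problem on $x\in\Sigma^n$ to storing a length-$O(n)$ walk on a constant-size directed graph $G=G(\vec p)$ that depends only on the empirical frequency vector $\vec p$ of $x$, and then invoke the hypothesized walk data structure $D$ as a black box. Because the frequencies are inverse powers of $2$ lower bounded by a positive constant, the code lengths $k_\sigma:=\lg(1/p_\sigma)$ are nonnegative integers all bounded by some absolute constant $K$, and $|\Sigma|=O(1)$. My goal is to arrange $G$ and the walk so that a single probe to $D$ at a position that is a fixed linear function of the query index $i$ recovers $x_i$.

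To build $G$, I first take the complete binary prefix-code tree $T$ on $\Sigma$ in which leaf $\sigma$ sits at depth exactly $k_\sigma$; the Kraft equality $\sum_\sigma 2^{-k_\sigma}=1$ guarantees such a $T$ exists. For each leaf $\sigma$ with $k_\sigma<K$ I attach a directed chain of $K-k_\sigma$ out-degree-one ``dummy'' vertices ending in a distinct vertex $\bar\sigma$; for $k_\sigma=K$ I simply set $\bar\sigma:=\sigma$. Finally I add an edge $\bar\sigma\to r$ for each $\sigma$, where $r$ is the root of $T$. In the resulting graph, $r$ and the internal nodes of $T$ have out-degree $2$, and every other vertex has out-degree $1$; the total vertex count is $O(K\cdot 2^K)=O(1)$. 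I then encode $x$ as the unique walk $W$ of length $N=n(K+1)$ consisting of $n$ consecutive blocks of $K+1$ vertices each, where the $i$-th block starts at $r$, descends through $T$ to the leaf for $x_i$, and then traverses the dummy chain to $\bar{x}_i$; between blocks the edge $\bar{x}_i\to r$ is used.

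Feeding $W$ to $D$ uses $\sum_i \lg\deg(v_i)+r$ bits. Within the $i$-th block, the root position and the $k_{x_i}-1$ internal-$T$ positions each contribute $\lg 2=1$ bit, while the leaf, dummy and $\bar{x}_i$ positions each contribute $\lg 1=0$ bits, for a per-block total of exactly $k_{x_i}$ bits; summing over all $n$ blocks gives $\sum_{i=1}^n k_{x_i}=H_0(x)$. The constant description of $G$ and the $\lg|G|=O(1)$ needed by $D$ are absorbed into the additive redundancy. To answer a dictionary query on index $i$, I invoke $D$ at the fixed walk position $i(K+1)$, which is the last vertex of the $i$-th block and therefore equals $\bar{x}_i$; since $\sigma\mapsto\bar\sigma$ is a bijection on $\Sigma$, a constant-size lookup table recovers $x_i$ in $O(1)$ further work, for total query time $t+O(1)$.

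The crux is reconciling two competing demands. On the one hand, achieving $\sum\lg\deg(v_i)=H_0(x)$ forces the per-symbol ``information content'' to vary as $k_{x_i}$, which naturally corresponds to a prefix-code descent whose length depends on $x_i$. On the other hand, the $O(1)$-redundancy budget rules out any auxiliary rank/select structure for locating a data-dependent position in the walk, so the walk index that stores the information about $x_i$ must be a function of $i$ alone. The dummy chains dissolve this tension: being out-degree $1$ they add nothing to the pointwise entropy sum, yet they force every block to have the same length $K+1$, so that $\bar{x}_i$ always lies at walk position $i(K+1)$. I expect arranging this synchronization -- while preserving the exact equality $\sum\lg\deg(v_i)=H_0(x)$ and keeping $G$'s description constant-size -- to be the only step that requires care; the remainder is routine bookkeeping.
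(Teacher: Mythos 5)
Your construction is essentially identical to the paper's: both encode each symbol $x_i$ as a root-to-leaf descent in a binary tree whose leaf $\sigma$ sits at depth $\lg(1/p_\sigma)$ (the paper calls it the Huffman tree; for dyadic frequencies this is exactly your Kraft tree), pad each leaf with a vertex-disjoint out-degree-$1$ return path so that every cycle has the same length $K+1=d_\mu+1$, and then query a fixed walk index that is a linear function of $i$ (the paper probes $V_{i(d_\mu+1)-1}$, matching your $\bar{x}_i$). The crux you single out---that out-degree-$1$ dummy paths force equal block lengths without perturbing $\sum_i\lg\deg(v_i)=H_0(x)$---is precisely the paper's argument, so the proposal matches both in construction and in the underlying idea.
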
 

This reduction formalizes the intuition that the 
bottleneck in both problems is \emph{variable-length} coding (unlike the case of regular graphs), 
and provides evidence that Theorem \ref{thm_pointwise_general_informal} might be optimal
(Indeed, in the bit-probe model ($w=1$),  it is known that any succinct Dictionary with constant 
or even $r=O(\lg n)$ redundancy, must have $\Omega(\lg n)$ decoding time \cite{Vio09}. For 
the related \emph{succinct Partial Sums}  problem \cite{PV09}, an $\Omega(\lg n)$ lower bound 
holds even in the cell-probe model $(w=\Theta(\lg n))$).

\begin{remark}
It is noteworthy that the assumption throughout the paper that the underlying graph $G$ is of finite size 
is necessary: A length-$n$ random walk on the undirected $n$-\emph{cycle} is equivalent to the 
succinct \emph{Partial Sums} problem, for which there is a cell-probe lower bound of $t\geq \Omega(\lg n/\lg\lg n)$ 
for constant redundancy \cite{PV09}. A key feature that circumvents this lower bound 
in our proofs 
is the fact that the walk \emph{mixes} fast (which doesn't happen on the $n$-cycle). 
This justifies the restriction to fixed  sized graphs. 
\end{remark}


\subsection{Technical Overview} \label{sec_tech_overview}
At a high level, our data structures use a \emph{two-level} dictionary scheme to encode the random walk.
Let us first focus on $d$-regular graphs $G$.
To store a walk on $G$, we begin by storing the set of vertices $V_i$ that are $\Theta(\lg n)$-far apart in the walk (called the \emph{milestones}) using the succinct dictionary of \cite{DPT10} (which we will refer to as 
\emph{the \DPT{} dictionary} in the following).
The \DPT{} dictionary is able to store any string $z\in \Sigma^k$ with only constant redundancy ($k\lg|\Sigma|+O(1)$ bits in total) and allow one to retrieve each $z_i$ in constant time (Theorem \ref{thm_DPT}).
In particular, when the string $z$ is drawn from the \emph{uniform} distribution, its space usage matches the input entropy.
When $G$ is regular (and hence has a uniform stationary distribution), this is indeed the case, as a standard mixing-time argument implies that the milestones are very close to being \emph{uniform and independent}. 

The second important feature of milestones is that they break the dependence across the remaining vertices in the walk.
That is, the milestones partitions the walk into ``blocks'' $B_j$ of length $\Theta(\lg n)$.
The Markov property of a random walk implies that these blocks of vertices are independent conditioned on the milestones.
Thus, the next idea is to use another (separate) dictionary to encode the vector of intermediate blocks 
$(B_1,\ldots, B_{O(n/\lg n)}) \in w^{O(n/\lg n)}$, \emph{conditioned} on the milestones.
Again because of the mixing-time argument, for each block $B_j$, the number of possible subwalks in the block \emph{given} the two milestones $V_i$ and $V_{i+O(\lg n)}$ is always approximately $d^{|B_i|}/|G|$, \emph{regardless of} the actual values that $V_i$ and $V_{i+O(\lg n)}$ take.
Hence, one can encode each subwalk using an integer no larger than $|\Sigma|=(1+o(1))d^{|B_i|}/|G|$, and the second dictionary is used to succinctly store these integer encodings.
When the input is uniformly random, these integer encodings are uniform and independent, hence \DPT{} matches the entropy.
Also, note that since each block $B_j$ is of length $O(\lg n)$, it fits in a constant number of words (as $|G|=O(1)$).
To see why a vertex between two milestones $V_i$ and $V_{i+\Theta(\lg n)}$ can be retrieved efficiently, note that the \DPT{} dictionary allows us to retrieve each symbol in the vector of milestones in constant time.
Therefore, it suffices to retrieve $V_i$, $V_{i+\Theta(\lg n)}$ as well as the block $B_j$ between them using a 
constant number of probes to both dictionaries.
This gives us enough information to recover the entire subwalk from $V_i$ to $V_{i+\Theta(\lg n)}$, and in particular, to recover the queried vertex.
Although the above argument assumes a uniformly random input, we emphasize that our data structure works for \emph{worst-case} input and queries.

Dealing with general graphs is a different ballgame, and the above construction does not obtain the 
claimed worst-case space bound.
The main reason is that for a uniformly sampled random length-$n$ walk, the marginal distribution of each (milestone) 
$V_i$ may be arbitrary and non-uniform.
Hence, we cannot apply the \DPT{} dictionary directly on the milestone-vector, as for non-uniform vectors its space 
usage would be much higher than the input entropy.
To overcome this issue, we use an idea inspired by rejection-sampling: 
We consider not only each milestone, but also the subwalks near it (with close indices).
We partition the set of subwalks into \emph{bundles}, such that for a uniformly random input, the bundle that 
contains the given subwalk is uniform.
More precisely, for each milestone $V_i$, 
\begin{enumerate}
\item a bundle is a subset of length-$2l$ subwalks that the input $(V_{i-l},V_{i-l+1},\ldots,V_{i+l})$ can take the value, for $l=\Theta(\lg n)$;
\item all subwalks in the same bundle have an \emph{identical} middle vertex $V_i = v_i$ for some $v_i \in [|G|]$;
\item each bundle consists of approximately the same number of subwalks, i.e., for a uniformly random input, the bundle that contains it is roughly uniform;
\item for different $V_i$, the bundles are almost ``independent.''
\end{enumerate}
We prove the existence of such good partition to subwalks using a spectral argument which helps control the number of 
subwalks  delimited by any fixed pair of vertices $(u,v)$. 
Given such bundling, instead of using \DPT{} to store the milestones themselves, 
we use it to store the \emph{name of the bundle} near each milestone.
By Item 3 and 4 above, when the walk is uniformly random, the bundles to be stored are \emph{uniform and independent}.
Hence, the size of \DPT{} dictionary matches the input entropy, as desired.
The second part of the data structure is similar to the regular graph case at the high level.
We store the blocks between the consecutive milestones, but now conditioned on \emph{the bundles} (not the milestones).
The query algorithm is also similar: to retrieve a vertex between two consecutive milestones, we first retrieve the bundles that contain the subwalks near them, and then retrieve the block, which is encoded conditioned on the bundles.
The actual construction for the non-regular case is much more technical than regular graphs.
See Section~\ref{sec_upper} for details.


The final challenge in our scheme  
is implementing the decoding process on a RAM -- The above scheme only provides a \emph{cell-probe} data structure 
(assuming arbitrary operations on $O(\lg n)$-bit words), since the aforementioned encoding of blocks $+$ bundles is extremely 
implicit. Therefore, decoding with standard RAM operations appears to require storing giant lookup tables to allow efficient retrieval. 
Circumventing this intuition is one of our main contributions.  
It turns out that the basic problem that needs to be solved is the following: Given a walk $(v_0,\ldots,v_l)$ of length $l$, for $l= O(\lg n)$, from $x$ to $y$ ($v_0=x$ and $v_l=y$), encode this walk using an integer between 1 and the number of such walks $({e_x}^{\top} (A_G)^l e_y)$, such that given an index $i\in[0, l]$, one can decode $v_i$ efficiently.
Note that both two endpoints $x$ and $y$ are given, and do not need to be encoded.
They would ultimately correspond to the milestones, which have already been stored elsewhere.
Our encoding procedure is based on a $B$-way divide-and-conquer.
As an example, when $B=2$, we first recursively encode $(v_0,\ldots,v_{l/2})$ and $(v_{l/2},\ldots,v_l)$, and then ``merge'' the halves.
Given the encoding of the two halves, the final encoding of the entire walk is the \emph{index in the lexicographic order} of the triple: I) $v_{l/2}$, II) encoding of $(v_0,\ldots,v_{l/2})$, and III) encoding of $(v_{l/2},\ldots,v_l)$.
To decode $v_i$, we first decode the entire triple by enumerate all possible values for $v_{l/2}$, and count how many length-$l$ walks have this particular value for $v_{l/2}$ (these counts can be stored in the lookup table).
Then we recurse into one of the two halves based on the value of $i$.
This gives us an $O(\lg l)=O(\lg\lg n)$ time decoding algorithm.
We can generalize this idea to larger $B$, by recursing on each of the $1/B$-fraction of the input.
However, the decoding algorithm becomes more complicated, as we cannot afford to recover the entire $O(B)$-tuple.
It turns out that to efficiently decode, one will need to do a \emph{predecessor search} on a set of $\exp(B)$ integers with \emph{increasing gaps}.
This set depends only on the underlying graph $G$, therefore, we can store this predecessor search data structure in the lookup table, taking $\exp(B)$ space.
Now, the recursion only has $\lg_B l=\frac{\lg\lg n}{\lg B}$ levels, each level still takes constant time, obtaining the claimed tradeoff.

It is worth noting that the \DPT{} dictionary supports \emph{appending} extra symbols in constant amortized time, 
hence it supports the (append-only) \emph{online} version of the problem. 
Since our data structure consists of two instances of \DPT{}, when the vertices in the random walk is given one at a time, 
our random-walk data structure can be build \emph{online} as well, with amortized constant update time.
The sizes of the two instances of \DPT{} may increase over time, but their \emph{ratio} remains fixed.
By storing memory words of the two instances in \emph{interleaving} memory locations   
with respect to the right ratio, we will not need to relocate the memory words when new vertices arrive.

For the harder task for matching the information-theoretic minimum space (i.e., the point-wise empirical entropy of the walk 
$\sum_i \lg(\deg(v_i))$), we show how to efficiently 
encode the random-walk problem (on arbitrary constant-size graphs) using P\v{a}tra\c{s}cu's \emph{aB-trees}~\cite{Pat08}. 
This provides a constant-redundancy scheme but increases decoding time to $t=O(\lg n)$.  
We provide evidence that this blowup might indeed be 
necessary: We design a \emph{succinct reduction} showing how to ``embed" the classic \emph{Dictionary} problem 
on inputs $x\in \mu^n$, as a random-walk on the \emph{Huffman Tree} of $\mu$, augmented with certain 
directed paths to enable fast decoding (this works for any distribution $\mu$ where each $\mu(i)$ is the inverse of a 
power of 2). This reduction shows that any asymptotic improvement on the aforementioned random walk data 
structure would lead to a improvement on the long-standing  dictionary problem 
(see  Theorem \ref{thm_Dic_to_nonreg_RW}).


\subsection{Related work} \label{sec_related_work}
The ``prior-free" version of the (generic) LDSC problem has been studied in the  
pattern-matching and information-theory communities, where in this setting the compression 
benchmark is typically some (high-order) empirical entropy $H_k(x)$ capturing 
``bounded-correlations" in the text (see e.g. \cite{Manzini99} for a formal definition). 
It has been shown that classical  \emph{universal compression}  schemes (such as Lempel-Ziv \cite{LZ78} 
compression as well as the Burrows-Wheeler transform \cite{BW}) can be made to have random-access to 
individual symbols at the price of small (but not \emph{succinct}) loss in compression 
(e.g., \cite{FM, DLRR13, TBW18, SW18}).  
In general, these results are incomparable to our distributional setting of the LDSC problem, 
as the space analysis is asymptotic in nature and based on the (ergodic) assumption that the input 
$(X_1,\ldots, X_n)$ has finite-range correlations or restricted structure (Indeed, this is 
confirmed by our impossibility result for general LDSC in Corollary \ref{cor_LDSC_LBs}). 
In the case of $k$-order Markov chains, where the $k$'th empirical and Shannon entropies actually match  
($H_k(X) = H(X)$), all previously known LDSC schemes generally have redundancy at least 
$r \geq \Omega(n/\lg n)$ regardless of the entropy of the source and decoding time 
(see e.g., \cite{DLRR13, TBW18} and references therein).  
In contrast, our schemes for achieve $r=O(1)$ redundancy and constant (or at most logarithmic) 
query time on a RAM.   

Technically speaking, the most  relevant literature to our paper is the work on 
\emph{succinct Dictionary} problem \cite{Buhrman02, Pagh02, Raman07_succinct, GolynskiRR08, 
Pat08, DPT10} (see Section \ref{subsec_dictionary_vs_RW} for the formal problem definition). 
The state-of-art, Due to \Pat \cite{Pat08}, is a 
succinct data structure (LDSC) for \emph{product distributions} 
on $n$-letter strings $\overline{X}\sim \mu^n$, with an \emph{exponential} tradeoff between 
query time and redundancy $r = O(n/(\lg n/t)^t)$ over the expected (i.e., zeroth-order) entropy 
of $\overline{X}$.  Whether this tradeoff is optimal is a long-standing open problem in succinct data structures 
(more on this in Section \ref{subsec_dictionary_vs_RW}). 
Interestingly, if $\mu$ is the \emph{uniform} distribution, a followup work of Dodis et. al 
\cite{DPT10} showed that constant redundancy and decoding time is possible on a word-RAM.  
(see Theorem \ref{thm_DPT}, which will also play a key role in our data structures). 
In some sense, our results show that such optimal tradeoff  
carries over to strings with finite-length correlations. 


\section{Preliminaries}

\subsection{Graphs and Random walks.} 
Let $G$ be an unweighed graph, $A$ be its adjacency matrix, and $P$ be the transition matrix of the random walk on $G$.
$P$ is $A$ with every row normalized to sum equal to 1.
\paragraph{Undirected regular graphs.}
When $G$ is undirected and $d$-regular, $A$ is real symmetric, and $P=\frac{1}{d}A$. 
All of the eigenvalues are real.
In particular, the largest eigenvalue of $A$ is equal to $d$, with eigenvector $\bOne$, the all-one vector.
Moreover, if $G$ is connected and non-bipartite, then all other eigenvalues have absolute values strictly less than $d$.
Suppose all other eigenvalues are all at most $(1-\epsilon)d$, the following lemma is known.
\begin{lemma}\label{lem_undirect_mix}
	Let $X$ be a vector of dimension $|G|$ corresponding to some distribution over the vertices.
	Let $U$ be the vector corresponding to the uniform distribution.
	We have
	\[
		\|\frac{1}{d}\cdot A(X-U)\|_2\leq (1-\epsilon)\|X-U\|_2.
	\]
\end{lemma}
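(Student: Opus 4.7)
The plan is to prove this by a direct spectral decomposition, using that the adjacency matrix $A$ of an undirected $d$-regular graph is real symmetric. First I would observe that because $A$ is symmetric, the spectral theorem gives an orthonormal eigenbasis $v_1,v_2,\ldots,v_{|G|}$ with real eigenvalues $\lambda_1 \geq \lambda_2 \geq \ldots \geq \lambda_{|G|}$. By the preamble of the paper, $\lambda_1 = d$ with eigenvector $v_1 = \bOne/\sqrt{|G|}$ (the normalized all-ones vector), and by the hypothesis of the lemma every other eigenvalue satisfies $|\lambda_i| \leq (1-\epsilon)d$ for $i \geq 2$.

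The key observation is that $Y := X - U$ is orthogonal to $\bOne$. Indeed, both $X$ and $U$ are probability distributions, so $\langle Y, \bOne\rangle = \sum_j X_j - \sum_j U_j = 1 - 1 = 0$. Hence, when expanded in the eigenbasis, $Y = \sum_{i \geq 2} c_i v_i$ with no component along $v_1$. Applying $A$ gives $AY = \sum_{i \geq 2} c_i \lambda_i v_i$, and by Parseval's identity
\[
\|AY\|_2^2 \;=\; \sum_{i\geq 2} c_i^2 \lambda_i^2 \;\leq\; (1-\epsilon)^2 d^2 \sum_{i\geq 2} c_i^2 \;=\; (1-\epsilon)^2 d^2 \|Y\|_2^2.
\]
Dividing by $d^2$ and taking square roots gives the claimed bound $\|\tfrac{1}{d} A(X-U)\|_2 \leq (1-\epsilon)\|X-U\|_2$.

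Honestly, there is no real obstacle here: the proof is a textbook spectral-gap calculation. The only thing worth flagging for the write-up is the orthogonality step, since it is what actually excludes the top eigenvalue $d$ from the contraction bound (without it, applying $A/d$ to an arbitrary vector would only give the trivial bound $\|\cdot\|_2 \leq 1$). Everything else is just Parseval and the assumed spectral gap.
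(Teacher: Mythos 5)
Your proof is correct and is the standard spectral-gap argument; the paper states this lemma without proof (as a ``known'' fact), and your argument---orthogonality of $X-U$ to the top eigenvector $\bOne$, followed by Parseval and the eigenvalue bound---is exactly the intended justification. The only hypothesis worth making explicit in the write-up is that the bound $(1-\epsilon)d$ on the non-top eigenvalues is a bound on their \emph{absolute values} (which the paper's preamble guarantees for connected non-bipartite $G$), since a one-sided upper bound alone would not control the negative end of the spectrum in the step $\lambda_i^2 \leq (1-\epsilon)^2 d^2$.
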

That is, every step of the random walk on $G$ makes the distribution $X$ close to the uniform by a constant factor.

\paragraph{Strongly connected aperiodic graphs.} 
For directed graph $G$, it is strongly connected if every node can be reached from every other node via directed paths.
For strongly connected $G$, it is aperiodic if the greatest common divisor of the lengths of all cycles in $G$ is equal to 1. 
One may view strongly connectivity and aperiodicity as generalization of connectivity and non-bipartiteness for undirected graphs from above.

For strongly connected aperiodic $G$, the Perron--Frobenius theorem asserts that
\begin{itemize}
	\item let $\lambda$ be the spectral radius of $A$, then $\lambda$ is an eigenvalue with multiplicity 1;
	\item let $\pi^{\top}$ and $\sigma$ be its left and right eigenvectors with eigenvalue $\lambda$ respectively, all coordinates of $\pi$ and $\sigma$ are positive.
\end{itemize}
Similarly, for the transition matrix $P$, 
\begin{itemize}
	\item it has spectral radius $1$, and $1$ is an eigenvalue with multiplicity 1;
	\item let $\nu^{\top}$ be its left eigenvector, then $\nu^{\top}$ is the unique stationary distribution.
\end{itemize}
Moreover, let $X_1$ and $X_2$ be two vectors, we have the following approximation on $X_1^{\top}A^lX_2$ (e.g., see~\cite{fill1991}).
\begin{lemma}\label{lem_direct_mix}
	We have
	\[
		X_1^{\top}A^lX_2=\lambda^l\cdot \left(\frac{\left<\sigma,X_1\right>\left<\pi,X_2\right>}{\left<\sigma,\pi\right>}\pm O(\|X_1\|_2\|X_2\|_2)\cdot \exp(-\Omega(l))\right).
	\]
\end{lemma}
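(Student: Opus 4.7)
The plan is to use the spectral decomposition of $A$ relative to its Perron eigenvalue $\lambda$, isolating the rank-one ``stationary'' part and bounding the rest by its spectral radius.

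First, define the rank-one operator $\Pi := \frac{\sigma\pi^{\top}}{\langle \sigma,\pi\rangle}$. Using $A\sigma=\lambda\sigma$ and $\pi^{\top}A=\lambda\pi^{\top}$ together with the definition, one checks directly that $A\Pi=\Pi A=\lambda\Pi$ and $\Pi^2=\Pi$, so $\Pi$ is a projection that commutes with $A$. Consequently $A$ splits along the invariant decomposition $\mathbb{R}^{|G|}=\operatorname{Im}\Pi\oplus\ker\Pi$ as
\[
 A \;=\; \lambda\,\Pi \;+\; A(I-\Pi),
\]
with $\Pi\cdot A(I-\Pi)=A(I-\Pi)\cdot\Pi=0$. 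Iterating yields the clean identity $A^l=\lambda^l\,\Pi+\bigl(A(I-\Pi)\bigr)^l$.

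Second, I would compute the contribution of each summand to $X_1^{\top}A^l X_2$. The first piece gives exactly the claimed leading term, $X_1^{\top}(\lambda^l\Pi)X_2=\lambda^l\cdot \frac{\langle\sigma,X_1\rangle\langle\pi,X_2\rangle}{\langle\sigma,\pi\rangle}$. For the second piece, Cauchy--Schwarz gives
\[
 \bigl|X_1^{\top}(A(I-\Pi))^l X_2\bigr| \;\le\; \|X_1\|_2\,\|X_2\|_2\,\bigl\|(A(I-\Pi))^l\bigr\|_{\mathrm{op}}.
\]
So it suffices to show $\|(A(I-\Pi))^l\|_{\mathrm{op}}\le \lambda^l\cdot \exp(-\Omega(l))$.

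Third, I would invoke Perron--Frobenius for strongly connected aperiodic $G$: $\lambda$ is the \emph{unique} eigenvalue of $A$ of modulus $\lambda$ and it has algebraic multiplicity one. Thus $A(I-\Pi)$, being $A$ restricted to the complementary invariant subspace $\ker\Pi$, has spectral radius $\rho\le(1-\epsilon)\lambda$ for some constant $\epsilon=\epsilon(G)>0$. Gelfand's formula (or, equivalently, passing to a Jordan decomposition of $A(I-\Pi)$) then gives $\|(A(I-\Pi))^l\|_{\mathrm{op}}\le C(G)\cdot \rho^l\cdot \poly(l)$. Since $(\rho/\lambda)^l\cdot\poly(l)=\exp(-\Omega(l))$, combining the two bounds yields the lemma.

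The main subtlety is the potential non-diagonalizability of $A$: higher-dimensional Jordan blocks at subdominant eigenvalues force a $\poly(l)$ prefactor in $\|(A(I-\Pi))^l\|_{\mathrm{op}}$. This is not a real obstacle, since the strict gap $\rho<\lambda$ absorbs any polynomial factor into the $\exp(-\Omega(l))$ error and $G$ is of fixed constant size, but it is the one place where one has to be careful rather than just reading off a diagonalization. An alternative, which avoids Jordan forms entirely, is to cite the convergence rate for powers of Perron--Frobenius matrices from~\cite{fill1991}, which is essentially the statement of the lemma.
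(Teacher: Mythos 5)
Your proof is correct. It shares the paper's skeleton: both isolate the Perron component (your projection $\Pi=\sigma\pi^{\top}/\langle\sigma,\pi\rangle$ is exactly the paper's decomposition $X_2=\alpha\sigma+X'$ with $\langle\pi,X'\rangle=0$), read off the leading term $\lambda^l\langle\sigma,X_1\rangle\langle\pi,X_2\rangle/\langle\sigma,\pi\rangle$, and then bound the complementary part geometrically. The two arguments diverge at the key error-bounding step. You bound $\|(A(I-\Pi))^l\|_{\mathrm{op}}$ abstractly: Perron--Frobenius gives a strict spectral gap on $\ker\Pi$, and Gelfand's formula (or Jordan form) converts that into $\lambda^l\exp(-\Omega(l))$ up to a $\poly(l)$ prefactor, which the gap absorbs since $G$ has constant size. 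The paper instead avoids Jordan forms entirely: it sets $D=\mathrm{diag}(\sigma/\pi)$, observes that $D^{1/2}(A^{\top})^cD^{-1}A^cD^{1/2}$ is a positive real symmetric matrix (for $c$ with $A^c>0$) whose Perron eigenvector is $\pi^{\top}D^{1/2}$, and runs an explicit contraction $X'^{\top}(A^{\top})^{i+c}D^{-1}A^{i+c}X'\leq((1-\epsilon)\lambda)^{2c}\cdot X'^{\top}(A^{\top})^iD^{-1}A^iX'$ in the $D^{-1}$-weighted quadratic form. The paper's route buys a self-contained, purely symmetric-eigenvalue argument with a clean per-$c$-steps decay rate; your route is shorter and more standard but leans on the slightly heavier fact that a simple, strictly dominant eigenvalue forces $\|M^l\|^{1/l}\to\rho(M)$ on the complementary invariant subspace. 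You correctly identified non-diagonalizability as the one point requiring care, and your handling of it is sound.
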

\begin{proof}[Proof (sketch)]
	Let us first decompose $X_2$ into two vectors $\alpha\cdot\sigma$ and $X'$ such that $\left<\pi,X'\right>=0$.
	Thus, we have
	\[
		\alpha=\frac{\left<\pi,X_2\right>}{\left<\sigma,\pi\right>},
	\]
	and
	\[
		X'=X_2-\alpha\cdot\sigma.
	\]

	For the first term, we have $A^l(\alpha\cdot\sigma)=\lambda^l\alpha\cdot \sigma$.
	To estimate the second term, let $D=\mathrm{diag}(\sigma/\pi)$, i.e., $D$ is the diagonal matrix with the $i$-th diagonal entry equal to $\sigma_i/\pi_i$.
	Since $A$ is strongly connected and aperiodic, there exists a constant $c$ such that all entries in $A^c$ are positive.
	We will show that $\lambda^{-2l}\cdot X'^{\top}(A^{\top})^l D^{-1}A^lX'$ decreases exponentially.
	To this end, consider the matrix
	\[
		D^{1/2}(A^{\top})^cD^{-1}A^c D^{1/2}.
	\]
	This matrix is positive real symmetric, and $\pi^{\top}D^{1/2}$ is its left eigenvector with eigenvalue $\lambda^{2c}$.
	Since $\pi^{\top}D^{1/2}$ is positive, by the Perron--Frobenius theorem, all other eigenvalues have magnitude strictly less than $\lambda^{2c}$, and assume they are all at most $((1-\epsilon)\lambda)^{2c}$ for some $\epsilon>0$.

	For every $i$, since $\left<\pi^{\top}D^{1/2},D^{-1/2}A^iX'\right>=0$, we have
	\[
		X'^{\top}(A^{\top})^{i+c}D^{-1}A^{i+c} X'\leq ((1-\epsilon)\lambda)^{2c}\cdot X'^{\top}(A^{\top})^{i}D^{-1}A^{i} X'.
	\]
	Therefore, we have
	\begin{align*}
		X'^{\top}(A^{\top})^l D^{-1}A^lX'&\leq \exp(-\Omega(l))\cdot \lambda^{2l}X'^{\top}D^{-1}X'\\
		&\leq \exp(-\Omega(l))\cdot \lambda^{2l}\cdot O(\|X'\|_2^2) \\
		&\leq \exp(-\Omega(l))\cdot \lambda^{2l}\cdot O(\|X_2\|_2^2).
	\end{align*}
	Thus, $\|A^lX'\|_2\leq \lambda^l\cdot O(\|X_2\|_2)\cdot \exp(-\Omega(l))$.

	Combining the two parts, we have
	\[
		X_1^{\top}A^lX_2=\frac{\left<\sigma,X_1\right>\left<\pi,X_2\right>}{\left<\sigma,\pi\right>}\cdot\lambda^l\pm \lambda^l\cdot O(\|X_1\|_2\|X_2\|_2)\cdot \exp(-\Omega(l)).
	\]
	This proves the lemma.
\end{proof}

One can also prove a similar statement about $P$.
\begin{lemma}\label{lem_mix_P}
	$X_1^{\top}P^lX_2=\left<\bOne,X_1\right>\left<\nu,X_2\right>\pm O(\|X_1\|_2\|X_2\|_2)\cdot \exp(-\Omega(l)).$
\end{lemma}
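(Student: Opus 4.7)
The plan is to obtain this lemma as an immediate consequence of Lemma~\ref{lem_direct_mix}, by re-running the exact same argument with the adjacency matrix $A$ replaced by the transition matrix $P$. The reason this works is that the proof of Lemma~\ref{lem_direct_mix} never uses the fact that $A$ has $\{0,1\}$ entries: it only uses that the underlying matrix is non-negative and comes from a strongly connected aperiodic graph, so that the Perron--Frobenius theorem applies and a high enough power of the matrix is entrywise positive. Both properties are shared by $P$ under the hypotheses of the lemma.

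I would first record the Perron data for $P$, which are already spelled out in the preliminaries: the spectral radius is $\lambda=1$, a right Perron eigenvector is $\sigma=\bOne$ (since $P\bOne=\bOne$ because $P$ is row-stochastic), and a left Perron eigenvector is $\pi=\nu$ (the stationary distribution). Crucially, $\langle\sigma,\pi\rangle=\langle\bOne,\nu\rangle=1$, since $\nu$ is a probability distribution. Plugging these identifications, together with $\lambda^l=1$, into the conclusion of Lemma~\ref{lem_direct_mix} yields
\[
X_1^{\top} P^l X_2 \;=\; \langle\bOne,X_1\rangle\,\langle\nu,X_2\rangle \;\pm\; O(\|X_1\|_2\|X_2\|_2)\cdot\exp(-\Omega(l)),
\]
which is exactly the claim.

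The one step worth double-checking is the spectral-gap estimate inside the proof of Lemma~\ref{lem_direct_mix}: one needs $D^{1/2}(P^{\top})^c D^{-1} P^c D^{1/2}$, with $D=\mathrm{diag}(\sigma/\pi)=\mathrm{diag}(1/\nu)$, to be an entrywise positive real symmetric matrix for some constant $c$. Symmetry is immediate from the definition, and entrywise positivity holds because strong connectivity plus aperiodicity force $P^c$ to be entrywise positive for all sufficiently large constant $c$. Hence the decomposition $X_2=\alpha\sigma+X'$ with $\langle\pi,X'\rangle=0$ and the contraction on the orthogonal complement of $\pi^{\top}D^{1/2}$ both carry over verbatim. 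I do not anticipate a genuine obstacle here; the lemma is essentially a restatement of Lemma~\ref{lem_direct_mix} for the weighted non-negative matrix $P$ in place of $A$.
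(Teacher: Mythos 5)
Your proposal is correct and matches what the paper intends: the paper gives no separate proof of this lemma, merely noting that a "similar statement" holds for $P$, and the intended argument is precisely yours — rerun (or directly instantiate) Lemma~\ref{lem_direct_mix} for the non-negative matrix $P$, whose Perron data are $\lambda=1$, $\sigma=\bOne$, $\pi=\nu$, and $\left<\bOne,\nu\right>=1$. Your check that the spectral argument only needs non-negativity, strong connectivity, and aperiodicity (not $\{0,1\}$ entries) is exactly the right point to verify.
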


\subsection{Space Benchmarks}
There are $\bOne^{\top}A^n\bOne$ different walks $(v_0,\ldots,v_n)$ on $G$ of length $n$.
Thus, $\lg \bOne^{\top}A^n\bOne$ bits is the optimal \emph{worst-case} space for storing a $n$-step random walk.

However, for non-regular graphs, some walk may appear with a higher probability than the others.
By Lemma~\ref{lem_mix_P}, the marginal of $V_i$ quickly converges to $\nu$.
Therefore, we have
\[
	H(V_{i+1}\mid V_i)=\sum_x \nu_x\cdot \lg\deg(x)+\exp(-\Omega(i)).
\]
By chain-rule and the Markov property of a random walk, we have
\[
	H(V_0,\ldots,V_n)=n\cdot \left(\sum_x \nu_x\cdot \lg\deg(x)\right)+O(1).
\]
This is the optimal \emph{expected} space any data structure can achieve.

Finally, note that the \emph{point-wise} space benchmark defined in the introduction
\[
	\lg |G|+\sum_{i=0}^{n-1}\lg\deg(v_i)
\]
implies almost optimal expected, since it assigns $\lceil\log 1/p\rceil$ bits to a walk that appears with probability $p$.

\subsection{Word-RAM model and Succinct Data Structures}

The following surprising (yet limited) result of \cite{DPT10} will be used as a key subroutine in 
our data structures. 

\begin{theorem}[Succinct dictionary for \emph{uniform} strings, \cite{DPT10}] \label{thm_DPT}
For any $|\Sigma| \leq w$ (not necessarily a power of 2), there is a succinct Dictionary 
storing any string $x\in \Sigma^n$ using $\lceil n\lg |\Sigma| \rceil$ \emph{bits} of space, 
supporting constant-time retrieval of any $x_i$, on a word-RAM with word size $w=\Theta(\lg n)$, 
assuming pre-computed lookup tables of $O(\lg n)$ words. 
Moreover, the dictionary supports the online (``append-only") version with constant update and query times. 
\end{theorem}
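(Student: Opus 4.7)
The plan is to combine a block decomposition with the \emph{spillover} technique of Dodis-P\v{a}tra\c{s}cu-Thorup (``changing base without losing space''), using word-level arithmetic to decode within small blocks and precomputed tables only to support the bit-level layout.

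First I would group the input string $x \in \Sigma^n$ into $m = n/k$ blocks of $k = \Theta(\lg n / \lg |\Sigma|)$ symbols each, so each block corresponds to an integer in $[0, M)$ with $M = |\Sigma|^k$, fitting in a single machine word. A single symbol within a block can then be retrieved from the block's integer value via a division and a modular reduction by a power of $|\Sigma|$; the $O(\lg n)$ required powers of $|\Sigma|$ are kept in the precomputed lookup table. This reduces the task to succinctly storing a sequence of $m$ integers, each in $[0, M)$, in $\lceil m \lg M \rceil = \lceil n \lg |\Sigma| \rceil$ bits with $O(1)$-time random access.

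Second, the heart of the construction is the spillover encoding of the sequence $(b_1, \ldots, b_m)$. Conceptually, view it as a single mixed-radix integer $I = \sum_{i=1}^m b_i M^{i-1} \in [0, M^m)$, which fits in exactly $\lceil m \lg M \rceil$ bits; however, extracting $b_i$ directly would cost $\Omega(i)$-time arithmetic on $I$. To support $O(1)$ retrieval I would layer a hierarchical representation: pair adjacent blocks into length-$2$ groups encoded as integers in $[0, M^2)$, pair those into length-$4$ groups, and so on, stopping after $O(1)$ levels once each composite block exceeds word size by a controlled factor. At each level the starting offset (in bits) of a group's encoding is a precomputed linear function of its index, whose residues modulo the word size can be tabulated in $O(\lg n)$ words, so a root-to-leaf traversal extracts $b_i$ using only $O(1)$ word operations (shifts, masks, and a divide-by-precomputed-power).

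For the online (append-only) version, new symbols are accumulated into a partial block; once a block is filled, the spillover representation is extended at the leaf and the necessary modifications propagate upward in the hierarchy. Since the hierarchy has $O(1)$ levels and each level handles only $O(1)$ amortized work per appended block, the amortized update time is $O(1)$. The main obstacle is step two: achieving exact space $\lceil n \lg |\Sigma| \rceil$ together with $O(1)$ retrieval when $|\Sigma|$ is not a power of $2$, because block boundaries do not land on word or bit boundaries in the natural layout. The workaround is the spillover layout plus the tiny lookup table for offset arithmetic; verifying that the $O(\lg n)$-word budget suffices and that the hierarchy collapses to constant depth (rather than $O(\log \log n)$) is the delicate part of the argument.
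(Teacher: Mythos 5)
First, note that the paper does not prove this statement at all: Theorem~\ref{thm_DPT} is imported verbatim from Dodis--P\v{a}tra\c{s}cu--Thorup and used as a black box, so there is no internal proof to compare against. Your sketch correctly names the relevant machinery (block decomposition into word-sized chunks, the spill-over technique, small lookup tables for offset arithmetic), but it has a genuine gap at exactly the point you flag as ``the delicate part,'' and that point is the entire content of the theorem. If you pair two blocks with values in $[0,M)$ into a group in $[0,M^2)$ and store each group in $\lceil 2\lg M\rceil$ bits, you lose up to one bit per group; over $\Theta(n/\lg n)$ groups this is $\Omega(n/\lg n)$ bits of redundancy, not $O(1)$. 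Stopping the hierarchy at constant depth does not help: you are left with $\Theta(n/\lg n)$ top-level integers whose universe size is still not a power of two, and concatenating them with per-item rounding reproduces the original problem. The missing idea is the spill-over representation lemma of \cite{DPT10}: an object from a universe of size $N$ is represented as $K$ whole bits \emph{plus} an un-rounded ``spill'' in $[R]$, and two such representations can be merged into one with additive redundancy only $O(2^{-r})$ where $R\geq 2^{r}$, because the fractional bits are carried forward in the spill rather than rounded. Choosing $R=\mathrm{poly}(n)$ (so the spill fits in $O(1)$ words) makes the total loss over all $n$ merge steps $o(1)$, and a single final rounding of the last spill yields $\lceil n\lg|\Sigma|\rceil$ bits overall.

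A second, related gap is the claim that a ``root-to-leaf traversal'' recovers $b_i$ in $O(1)$ time. In a spill-over chain, decoding block $i$ requires both its $K_i$ stored bits and the spill it emitted, which was absorbed into a neighboring block's encoding; done naively this cascades and costs $\Omega(i)$ or $\Omega(\lg n)$ lookups. The DPT construction avoids this with a carefully designed aggregation so that each block's spill is recoverable with $O(1)$ probes; your proposal does not address where the spills live or how they are retrieved. Similarly, the online/append-only claim needs the observation that appends only touch the last (incomplete) block and the final spill, not a full upward propagation. As written, the proposal is a plausible table of contents for the DPT paper rather than a proof: the quantitative redundancy accounting and the constant-time spill recovery are both asserted rather than established.
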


We remark that the cost of the pre-computed lookup tables is negligible and they are shared across 
instances (they are essentially small code-books for decoding and hence do not depend on the input 
itself).  This is a standard requirement for variable-length coding schemes.


\section{Succinct Data Structures for Storing Random Walks}\label{sec_upper}
\subsection{Warmup: $d$-Regular Graphs}\label{sec_upper_reg}

Fix a $d$-regular graph $G$ with $k$ vertices, we assume that $G$ is connected and non-bipartite.
Denote its adjacency matrix by $A$.
In the following, we show that a walk on $G$ can be stored succinctly, and allow efficient decoding of each vertex.

\begin{theorem}
	Given a walk $(v_0,\ldots,v_n)$ in $G$, there is a succinct \emph{cell-probe} data structure that uses at most $\lg_2 |G|+n\lg_2 d+3$ bits of memory, and supports retrieving each $v_q$ in constant time for $q=0,\ldots,n$, assuming the word-size $w\geq \Omega(\lg n)$.
	Moreover, the data structure can be implemented on a word RAM using an extra $r$-bit lookup table, which depends only on $G$, supporting vertex retrievals in $O(\frac{\lg\lg n}{\lg\lg r})$ time, for any $r\geq \Omega(\lg^2 n)$.
\end{theorem}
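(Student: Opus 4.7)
The plan is a two-level scheme built on the \DPT~dictionary (Theorem~\ref{thm_DPT}). Choose a milestone spacing $\ell = C\lg n$ where $C$ is a large enough constant depending on the spectral gap $\epsilon$ of $G$ coming from Lemma~\ref{lem_undirect_mix}, and for simplicity assume $\ell \mid n$; write $m = n/\ell$. Designate the vertices $V_0, V_\ell, V_{2\ell},\ldots, V_{m\ell}$ as the \emph{milestones}, and call the length-$\ell$ subwalk $(V_{i\ell},\ldots,V_{(i+1)\ell})$ the $i$-th \emph{block} $B_i$. The data structure will consist of two \DPT~instances: the first stores the milestone vector in $[|G|]^{m+1}$, and the second stores the vector $(B_0,\ldots,B_{m-1})$, where each block is encoded as an integer into a common alphabet $\Sigma_2$ of size $N := \max_{u,v\in [|G|]}[A^\ell]_{u,v}$, in a way consistent with its two endpoints (which are already recorded in the first dictionary).

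To bound the space I will apply Lemma~\ref{lem_undirect_mix} $\ell$ times, obtaining $[A^\ell/d^\ell]_{u,v} = 1/|G| \pm (1-\epsilon)^\ell$, so for $C$ large enough $[A^\ell]_{u,v} = (1 \pm n^{-10})\, d^\ell/|G|$ uniformly in $u,v$, whence $N \leq (1 + n^{-10})\, d^\ell/|G|$. Plugging into Theorem~\ref{thm_DPT}, the two dictionaries use at most
\[
\bigl\lceil (m+1)\lg |G| \bigr\rceil + \bigl\lceil m \lg N \bigr\rceil \;\leq\; \lg|G| + m\ell\lg d + m\cdot O(n^{-10}) + 2
\]
bits, which simplifies to $\lg|G| + n\lg d + o(1) + 2 \leq \lg|G| + n\lg d + 3$ for large enough $n$. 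Correctness of this count requires that the bijection between length-$\ell$ walks from $u$ to $v$ and integers in $[0,[A^\ell]_{u,v})$ is well-defined and efficiently computable; this is where the implementation work will concentrate.

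The query algorithm is then immediate in the cell-probe model: given $q$, compute $i = \lfloor q/\ell\rfloor$, retrieve $V_{i\ell}$ and $V_{(i+1)\ell}$ from the first \DPT~in $O(1)$ probes, retrieve the block encoding $B_i$ from the second \DPT~in $O(1)$ probes, and decode the $(q - i\ell)$-th vertex from the triple $(V_{i\ell}, V_{(i+1)\ell}, B_i)$, which fits in $O(1)$ words and so can be handled by arbitrary cell-probe logic. This already proves the first part of the theorem.

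The main obstacle is implementing the intra-block decoding on a word-RAM without a $|G|^{\Theta(\ell)}$-sized table. The plan is a $B$-ary divide-and-conquer encoding: recursively split a length-$\ell$ walk from $u$ to $v$ into $B$ equal pieces by writing down the $B-1$ intermediate vertices, and combine the sub-encodings in lexicographic order into a single integer in $[0,[A^\ell]_{u,v})$. To decode the vertex at relative position $j$, identify which of the $B$ pieces contains it by a predecessor search among the partial counts $\sum [A^{\ell/B}]_{u,w_1}[A^{\ell/B}]_{w_1,w_2}\cdots [A^{\ell/B}]_{w_{B-1},v}$ taken over lexicographically ordered tuples $(w_1,\ldots,w_{B-1})$. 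This set has size $|G|^{B-1}$ and depends only on $G$ and $\ell$, so a static predecessor structure answering queries in $O(1)$ time can be precomputed once and kept in the lookup table. Taking $B$ maximal with $|G|^{O(B)} \leq r$, the recursion has $\log_B \ell = O(\lg\lg n/\lg\lg r)$ levels, giving the claimed word-RAM query time. The delicate point I expect to spend the most care on is verifying that the divide-and-conquer encoding is a genuine bijection onto $[0,[A^\ell]_{u,v})$ --- that the lexicographic concatenation of sub-encodings consumes exactly the right range at every scale --- which I plan to establish by induction on $\log_B \ell$.
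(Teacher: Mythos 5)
Your construction is the same two-level scheme as the paper's: milestones every $\Theta(\lg n)$ steps stored in one \DPT{} instance, blocks re-encoded as integers in a range of size $(1+o(1))d^{\ell}/|G|$ (via the mixing bound of Lemma~\ref{lem_undirect_mix}) stored in a second \DPT{} instance, and the same space accounting giving $\lg|G|+n\lg d+3$. The cell-probe half of the theorem is correct as you have written it.

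The gap is in the word-RAM half, and it is not where you expect it. The bijection you flag as the delicate point is routine: the lexicographic ranking of the tuple (intermediate vertices, sub-encodings) onto $\bigl[0,\cN_{\ell}(u,v)\bigr)$ is just mixed-radix counting and an easy induction. The real issue is your assertion that ``a static predecessor structure answering queries in $O(1)$ time can be precomputed once.'' Generic static predecessor search over $|G|^{B-1}$ keys drawn from a universe of size $d^{\ell}=n^{\Theta(1)}$ does \emph{not} admit constant query time in space comparable to the number of keys; with only $r$ bits of table you cannot direct-address the universe, and fusion-tree-type bounds give $\Theta(\lg(|G|^{B})/\lg w)$ time, which is non-constant precisely in the regime of large $B$ that your tradeoff needs. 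The paper's Lemma~\ref{lem_ram_decode} gets around this by \emph{reordering} the tuples $(w_1,\ldots,w_{B-1})$ so that they are sorted by their walk counts $\prod_i \cN_{\ell/B}(w_i,w_{i+1})$; the resulting prefix sums then have monotone gaps, and for monotone-gap sets Pătraşcu's observation in \cite{Pat08} yields constant-time predecessor search with linear space. Your lexicographic ordering of the tuples destroys this monotonicity, so either you must adopt the sorted-by-count ordering (and store the permutation in the lookup table so you can still recover $w_i$ and the residual rank $K_i$), or you must supply some other constant-time predecessor mechanism — as written, each recursion level costs more than $O(1)$ and the claimed $O(\lg\lg n/\lg\lg r)$ bound does not follow.
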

\begin{proof}
	The idea is to divide the walk into blocks of length $l$ for $l=\Theta(\lg n)$, so that if we take every $l$-th node in a uniformly random walk (which we call the milestones), they look almost independent.
	We first store all these milestones using the \DPT{} dictionary, which introduces no more than one bit of redundancy.
	Then note that conditioned on the milestones, the subwalk between any two adjacent milestones are also independent and uniform over some set, then we store the subwalks \emph{conditioned on the milestones} using \DPT{} again.

	More formally, the top eigenvalue of $A$ is equal to $d$.
	Suppose all other eigenvalues are at most $(1-\epsilon)d$, we set $m=\lfloor\frac{\epsilon}{2}\cdot n/\ln n\rfloor$, and $l=n/m\geq \frac{2}{\epsilon}\ln n$.
	We divide the walk into $m$ blocks of length approximately $l$ each.
	Let $a_i=\lfloor (i-1)\cdot l\rfloor$ for $1\leq i\leq m+1$, be the $m+1$ milestones.
	Note that $a_1=0$ and $a_{m+1}=n$.
	The $i$-th block is from the $i$-th milestone $v_{a_i}$ to the $(i+1)$-th $v_{a_{i+1}}$.
	Hence, the length of the block is in $(l-1, l+1)$.

	The first part of the data structure stores all milestones, i.e., $v_{a_i}$ for all $1\leq i\leq m+1$ using \DPT.
	This part uses $\lceil (m+1)\lg_2 |G|\rceil$ bits of memory.

	Next, we store the subwalks between the adjacent milestones.
	By Lemma~\ref{lem_undirect_mix}, after $l$ steps of random walk from any vertex $v_{a_i}$, the $\ell_2$ distance to the uniform distribution $U$ is at most $(1-\epsilon_k)^l\leq n^{-2}$, hence the $\ell_{\infty}$ distance to $U$ is also at most $n^{-2}$.
	In particular, the probability that the random walk ends at each vertex is upper bounded by $1/|G|+1/n^2$.
	That is, given the milestones $v_{a_i}$ and $v_{a_{i+1}}$, the number of possible subwalks in the $i$-th block (from $v_{a_i}$ to $v_{a_{i+1}}$) is always at most
	\[
		\left(\frac{1}{|G|}+\frac{1}{n^2}\right)\cdot d^{a_{i+1}-a_i},
	\]
	\emph{regardless of} the values of $v_{a_i}$ and $v_{a_{i+1}}$.

	Hence, the subwalk can be encoded using a positive integer between $1$ and $\lfloor(1/|G|+1/n^2)\cdot d^{a_{i+1}-a_i}\rfloor$.
	Note that since $l=O(\lg n)$, this integer has $O(w)$ bits.
	For cell-probe data structures, one can hardwire an arbitrary encoding for every possible pair of milestones and every possible length of the block. 
	We will defer the implementation on RAM to the end of this subsection (Lemma~\ref{lem_ram_decode}), and let us focus on the main construction for now.

	We obtain an integer for each subwalk between adjacent milestones.
	We again use \DPT{} to store these integers.
	This part uses at most 
	\[
		\left\lceil\sum_{i=1}^{m} \lg_2 \left(\left(1/|G|+1/n^2\right)\cdot d^{a_{i+1}-a_i}\right)\right\rceil
	\]
	bits of memory.

	Therefore, the number of bits we use in total is at most
	\begin{align*}
		&\ ((m+1)\lg_2 |G|+1)+(\sum_{i=1}^{m} \lg_2 \left(\left(|G|^{-1}+n^{-2}\right)\cdot d^{a_{i+1}-a_i}\right)+1)\\
		=&\ (m+1)\lg_2 |G|+\sum_{i=1}^{m}(a_{i+1}-a_i)\lg_2 d+m \lg_2(|G|^{-1}+n^{-2})+2 \\
		=&\ \lg_2 |G|+(a_{m+1}-a_1)\lg_2 d+m\lg_2 (1+|G|\cdot n^{-2})+2 \\
		\leq&\ \lg_2 |G|+n\lg_2 d+m|G|\cdot n^{-2}+2 \\
		\leq&\ \lg_2 |G|+n\lg_2 d+3.
	\end{align*}

	The query algorithm in the cell-probe model is straightforward.
	To retrieve $v_q$, we first compute the block it belongs to: $i=\lceil q/l\rceil$.	
	Then we query the first part of the data structure to retrieve both $v_{a_i}$ and $v_{a_{i+1}}$, and query the second part to retrieve the integer that encodes the subwalk between them conditioned on $v_{a_i}$ and $v_{a_{i+1}}$.
	They together recover the whole subwalk, and in particular, $v_q$.

\end{proof}


\paragraph{Decoding on the word-RAM.}

Denote by $\cN_l(x, y)$, the number of different walks from $x$ to $y$ of length $l$, i.e., $\cN_l(x, y)=\bE_x^\top A^l \bE_y$.
In the following, we show that for every $x,y$ and $l=O(\lg n)$, there is a way to encode such a walk using an integer in $[\cN_l(x, y)]$, which allows fast decoding.
\begin{lemma}\label{lem_ram_decode}
	For $l=O(\lg n)$, given a length-$l$ walk from $x$ to $y$, one can encode it using an integer $K\in[\cN_l(x, y)]$ such that with an extra lookup table of size $r$, 
	depending only on the graph $G$, one can retrieve the $q$-th vertex in the walk in $O(\frac{\lg\lg n}{\lg\lg r})$ time, for any $r\geq \Omega(\lg^2 n)$.
\end{lemma}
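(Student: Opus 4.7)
The plan is to encode the walk via a recursive $B$-way decomposition, where $B$ is a parameter to be chosen at the end so that tables of size $\exp(B)$ fit into the $r$-bit lookup budget. For intuition, first consider $B=2$: given a walk $(v_0=x,v_1,\ldots,v_l=y)$, let $m=\lfloor l/2\rfloor$ and recursively encode the halves as integers $K_L\in[\cN_m(x,v_m)]$ and $K_R\in[\cN_{l-m}(v_m,y)]$. The global code $K\in[\cN_l(x,y)]$ is the lexicographic rank of the triple $(v_m,K_L,K_R)$ among all legal triples; this is consistent because $\cN_l(x,y)=\sum_u \cN_m(x,u)\cdot\cN_{l-m}(u,y)$. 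To decode, let $C_u=\sum_{u'<u}\cN_m(x,u')\cdot\cN_{l-m}(u',y)$; these prefix sums depend only on $(G,m,l,x,y)$ and live in the lookup table. Then $v_m$ is the unique $u$ with $C_u\leq K<C_u+\cN_m(x,u)\cdot\cN_{l-m}(u,y)$, after which $K_L,K_R$ are obtained by one Euclidean division of $K-C_{v_m}$, and we recurse into the half containing index $q$. This already yields $O(\lg l)=O(\lg\lg n)$ decoding time.

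To obtain the claimed $O(\lg\lg n/\lg\lg r)$ bound, I would generalize to a $B$-way split: partition the walk at positions $m_1<\cdots<m_{B-1}$ with $m_{j+1}-m_j\approx l/B$, encode each of the $B$ sub-walks recursively as $K_j$, and define the encoding as the rank of $(v_{m_1},\ldots,v_{m_{B-1}},K_1,\ldots,K_B)$ in lexicographic order, which is well-defined since $\cN_l(x,y)=\sum_{\vec u}\prod_j \cN_{m_{j+1}-m_j}(u_j,u_{j+1})$, where $u_0=x$, $u_B=y$. The recursion depth drops to $\log_B l$, so if each level can still be executed in $O(1)$ time, we are done.

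The main obstacle is precisely this per-level work: naively, decoding one level demands recovering all $B-1$ intermediate milestones, costing $|G|^{B-1}$ time. The observation that makes this tractable is that to recurse we need only identify the single sub-walk that contains index $q$, equivalently, we need a predecessor query that locates $K$ inside the $|G|^{B-1}$ cells indexed by the tuples $\vec u$ in lexicographic order. The cell boundaries form a static set of $\exp(B)$ integers depending only on $(G,l,x,y)$ and the chosen split, and crucially the gaps between consecutive boundaries grow geometrically with the counts $\cN_{m_{j+1}-m_j}(u_j,u_{j+1})$. Such a ``rapidly spaced'' static predecessor set admits an $O(1)$-time query using standard word-RAM tricks (for instance a sketched trie on the leading bits, or a packed B-tree), built once and stored in the lookup table; once $\vec u$'s relevant prefix is identified, the corresponding sub-encoding $K_j$ is extracted by subtracting the cell base and dividing out the other factors, both $O(1)$ on a word of $\Omega(\lg n)$ bits.

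Setting $B=\Theta(\lg r)$ so that $\exp(B)\leq r$, the predecessor structures and the tables of $\cN_{l'}(u,v)$ for all $l'\leq l$, $u,v\in V(G)$, fit in $r$ bits (here we use $|G|=O(1)$ and $l=O(\lg n)$). The recursion has depth $\log_B l=O(\lg\lg n/\lg\lg r)$, and each level contributes $O(1)$ work, yielding the stated $O(\lg\lg n/\lg\lg r)$ decoding time. The hard part I expect to defend carefully is the $O(1)$ predecessor step: one must verify that the integers in question really do have the gap structure required by whichever word-RAM predecessor primitive is invoked, and that all intermediate integers (which can be as large as $d^{O(\lg n)}$) can be manipulated in $O(1)$ word operations.
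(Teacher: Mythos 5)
Your overall route is the same as the paper's: a $B$-way divide-and-conquer, encoding by the lexicographic rank of the tuple of milestones together with the sub-encodings, a two-way warm-up giving $O(\lg\lg n)$ time, and a predecessor search over the $|G|^{B-1}$ cell boundaries stored in the lookup table. The one place where your argument has a genuine gap is exactly the step you flagged as needing defense: the claim that the cell boundaries, taken in \emph{lexicographic order of the milestone tuples} $\vec u$, form a ``rapidly spaced'' set with geometrically growing gaps. This is not true in general. The gap of the cell indexed by $\vec u$ is $\prod_j \cN_{m_{j+1}-m_j}(u_j,u_{j+1})$, and as $\vec u$ ranges lexicographically these products fluctuate arbitrarily (they can even drop to very small values between large ones). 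Consequently no structure is available for a generic $O(1)$ predecessor primitive; and a \emph{general} static predecessor query over $|G|^{B-1}=2^{\Theta(B)}$ arbitrary integers in a universe of size $2^{O(\lg n)}$ cannot be answered in constant time once $B\gg\lg\lg n$ (fusion-tree-type bounds give $O(B/\lg\lg n)$), which would destroy the claimed per-level $O(1)$ cost and hence the final bound.

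The paper's fix is small but essential: it stores the tuples $(z_2,\ldots,z_B)$ in the lookup table \emph{sorted by their walk-counts} $\prod_j \cN(z_j,z_{j+1})$ and lets the code $Z$ of a tuple be its rank in \emph{that} order rather than the lexicographic one. The prefix sums of a sorted sequence have \emph{monotone} (non-decreasing) gaps, and \cite{Pat08} observes that monotone gaps admit a linear-space, constant-time predecessor structure; this structure, together with the precomputed suffix products $\prod_{i'>i}\cN(z_{i'},z_{i'+1})$ needed to peel off $K_i$ in $O(1)$ arithmetic operations, is what lives in the $2^{O(B)}$-bit lookup table. If you replace your ``geometric gaps in lexicographic order'' claim with this reordering (everything else in your write-up, including the parameter setting $B=\Theta(\lg r)$ and the word-size accounting, is fine), the proof goes through.
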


\begin{proof}
	To better demonstrate how we implement this subroutine, we will first present a solution with a slower decoding time of $O(\lg\frac{\lg n}{\lg r})=O(\lg\lg n-\lg\lg r)$, and for simplicity, we assume $l$ is a power of two for now.

	Given a length-$l$ walk from $x$ to $y$, the encoding procedure is based on divide-and-conquer.
	We first find the vertex $z$ in the middle of the walk, and recursively encode the two length-$l/2$ walks from $x$ to $z$ and from $z$ to $y$ (given the endpoints).
	Suppose from the recursion, we obtained two integers $K_1\in[\cN_{l/2}(x,z)]$ and $K_2\in[\cN_{l/2}(z,y)]$, then the final encoding will be the index in the lexicographic order of the triple $(z, K_1, K_2)$.
	That is, we encode the walk by the integer
	\[
		K=\sum_{z'<z} \cN_{l/2}(x,z')\cN_{l/2}(z',y)+(K_1-1)\cN_{l/2}(z, y)+K_2.
	\]
	Hence, $K$ is an integer between $1$ and $\sum_{z}\cN_{l/2}(x,z)\cN_{l/2}(z,y)=\cN_l(x,y)$.

	We store the constants $\cN_{l'}(x,y)$ for all $l'\in[1,l]$ and vertices $x, y$ in the lookup table.
	The decoding procedure is straightforward.
	Given $x, y$ and $K$, to decode the $q$-th vertex in the walk, we first recover the triple $(z, K_1, K_2)$.
	To this end, we cycle through all $z$ in the alphabetic order: If $K>\cN_{l/2}(x,z)\cN_{l/2}(z,y)$, subtract $K$ by $\cN_{l/2}(x,z)\cN_{l/2}(z,y)$, and increment $z$; Otherwise, the current $z$ is the correct middle vertex.
	Then $K_1$ and $K_2$ can be computed by $K_1=\lfloor (K-1)/\cN_{l/2}(z,y)\rfloor+1$ and $K_2=(K-1)\!\!\mod \cN_{l/2}(z,y)+1$.
	Next,
	\begin{itemize}
		\item if $q=l/2$, $z$ is the queried vertex, and return $z$;
		\item if $q<l/2$, recursively query the $q$-th vertex in the length-$l/2$ walk from $x$ to $z$;
		\item if $q>l/2$, recursively query the $(q-l/2)$-th vertex in the walk from $z$ to $y$.
	\end{itemize}
	This gives us a decoding algorithm running in $O(\lg\lg n)$ time.
	To obtain a faster decoding time of $t<\lg\lg n$, we could run the above recursion for $t$ levels, and arrive at a subproblem asking to decode a length-$(l/2^t)$ walk from an integer no larger than $2^{O(l/2^t)}$.
	Instead of continuing the recursion, we simply store answers to all possible such decoding subproblems in a lookup table of size $r=k^2\cdot (l/2^t)\cdot 2^{O(l/2^t)}=2^{O(2^{-t}\lg n)}$.
	Hence, the decoding time is $t=O(\lg\frac{\lg n}{\lg r})$.

	\bigskip

	To obtain the claimed decoding time of $O(\frac{\lg\lg n}{\lg\lg r})$, the encoding algorithm will be based on a $B$-way divide-and-conquer for some parameter $B>2$.
	Given a length-$l$ walk from $x$ to $y$, we first consider $B+1$ vertices $(x=)z_1,z_2,\ldots,z_B,z_{B+1}(=y)$ such that $z_i$ is the $\lfloor (i-1)l/B\rfloor$-th vertex in the walk, and recursively encode the length-$l/B$ walks from $z_i$ to $z_{i+1}$ for $i=1,\ldots,B$, obtaining integers $K_1,\ldots,K_B$.
	Next, we encode the vertices $z_2,\ldots,z_B$ by an integer $Z\in[|G|^{B-1}]$.
	The final encoding is according to the lexicographic order of the tuple $(Z,K_1,\ldots,K_B)$.
	To decode a vertex between $z_i$ and $z_{i+1}$, the algorithm will need to recover $z_i,z_{i+1}$ and $K_i$ in order to recurse.
	However, when $B$ is large, we cannot afford to enumerate $Z$ as before.
	Instead, we will use a trick from~\cite{Pat08}, which allows us to recover each $z_i$ in constant time, as well as $K_i$.

	More specifically, for each tuple $(z_2,\ldots,z_B)$, consider the number of walks that go through them at the corresponding vertices
	\[
		\prod_{i=1}^B \cN_{\lfloor i l/B\rfloor-\lfloor (i-1)l/B\rfloor}(z_i,z_{i+1}).
	\]
	In the lookup table, we store all these tuples in the \emph{sorted order by this number} (break ties arbitrarily), using $|G|^{B-1}\cdot (B-1)\lceil\lg |G|\rceil$ bits.
	The encoding $Z$ of the tuple will be the index in this order.
	Therefore, to decode $Z$ from $x,y$ and $K$, it suffices to find the largest $Z$ such that the total number of walks corresponding to a tuple $(z_2,\ldots,z_B)$ ranked prior to $Z$, is smaller than $K$.
	This is precisely a \emph{predecessor search} data structure problem over a set of size $|G|^{B-1}$ with \emph{monotone gaps}, where the set consists of for all $Z$, the total number of walks corresponding to a tuple ranked prior to $Z$, and the query is $K$.
	Because we sort the tuples by the number of corresponding walks, the numbers in the set have monotone gaps.
	It was observed in~\cite{Pat08} that the monotone gaps allow us to solve predecessor search with linear space and constant query time.
	Hence, we further store in the lookup table, this linear space predecessor search data structure, using $O(|G|^{B-1})$ words (the standard predecessor search problem only returns the number, however, it is not hard to have the data structure also return the index $Z$.)
	From the predecessor search data structure, we obtain $Z$, hence $z_i$ and $z_{i+1}$ from the first lookup table above, as well as $K'$, the rank of the tuple $(K_1,\ldots,K_B)$ given $Z$.
	Thus, $K_i$ can be computed by 
	\[
		K_i=\lfloor (K'-1)/\prod_{i'={i+1}}^B \cN_{\lfloor i' l/B\rfloor-\lfloor (i'-1)l/B\rfloor}(z_{i'},z_{i'+1})\rfloor \!\!\mod \cN_{\lfloor i l/B\rfloor-\lfloor (i-1)l/B\rfloor}(z_i,z_{i+1})+1.
	\]
	It can be computed in constant time, once we also have stored in the lookup table, for all $(z_2,\ldots,z_B)$ and all $i$, the number 
	\[
		\prod_{i'={i+1}}^B \cN_{\lfloor i' l/B\rfloor-\lfloor (i'-1)l/B\rfloor}(z_{i'},z_{i'+1}).
	\]

	Finally, the recursion has $O(\lg_B l)=O(\frac{\lg\lg n}{\lg B})$ levels, and each level takes constant time.
	We store in the lookup table for all $x,y$ and $l'\leq l$, the above tables and the predecessor search data structure using in total $r=O(|G|^{B-1}\cdot l^2)=2^{O(B)}\cdot \lg^2 n$ bits.
	Thus, for $r\geq \lg^2 n$, the decoding time is $O(\frac{\lg\lg n}{\lg\lg r})$.
	This proves the lemma.
\end{proof}




\subsection{General Graphs}\label{sec_upper_general}

In this subsection, we prove our theorem for storing a walk in a general directed graph with 
respect to worst-case space bound.
Fix a strongly connected aperiodic directed graph $G$.
Let $A$ be its adjacency matrix.
The total number of length-$n$ walks in $G$ is equal to $\mathbf{1}^{\top} A^n \mathbf{1}$.

\begin{theorem}\label{thm_general}
	Let $G$ be a strongly connected aperiodic directed graph.
	Given a length-$n$ walk $(v_0,\ldots,v_n)$ on $G$, one can construct a data structure that uses $\lg (\mathbf{1}^{\top} A^n \mathbf{1})+5$ bits of space.
	Then there is a query algorithm that given an index $q\in[0,n]$, retrieves $v_q$ in $O(\frac{\lg\lg n}{\lg\lg r})$ time with access to the data structure and a lookup table of $r\geq \Omega(\lg^2 n)$ bits, where the lookup table only contains precomputed information about $G$, on a word RAM of word-size $w=\Theta(\lg n)$.
\end{theorem}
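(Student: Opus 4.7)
The plan is to extend the two-level dictionary scheme from Section~\ref{sec_upper_reg} to the strongly connected aperiodic case, where the principal obstacle is that milestones are no longer uniformly distributed and the count $\cN_l(u,v)$ depends nontrivially on $(u,v)$. Setting $l = \Theta(\lg n)$ and $m = \Theta(n/\lg n)$, I would introduce milestones $v_{a_1},\ldots,v_{a_{m+1}}$ as in the regular case, but instead of feeding the milestone sequence directly into \DPT{} (which is wasteful since the left eigenvector $\pi$ of $A$ is not uniform), I would associate to each milestone $v_{a_i}$ a length-$2l$ context window $(v_{a_i-l},\ldots,v_{a_i+l})$ and encode the name of a \emph{bundle} containing this window, following the outline in Section~\ref{sec_tech_overview}.

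The core of the proof is to construct, for each vertex $v\in[|G|]$, a partition of the set $W_v$ of length-$2l$ walks with middle vertex $v$ into bundles, satisfying (i) all bundles have nearly equal \emph{weighted} count, where the weight assigned to a window $(x_0,\ldots,x_{2l})$ is $\pi_{x_0}\sigma_{x_{2l}}$, and (ii) the number of bundles across all middle vertices hits a target $M^\star$ chosen to make the two-dictionary accounting tight. The weight choice is forced by Lemma~\ref{lem_direct_mix}: the number of length-$n$ walks passing through a given window at position $a_i$ factors as $(\bOne^\top A^{a_i-l}\bE_{x_0})\cdot(\bE_{x_{2l}}^\top A^{n-a_i-l}\bOne) = \lambda^{n-2l}\cdot c\cdot\pi_{x_0}\sigma_{x_{2l}}\cdot(1\pm\exp(-\Omega(l)))$, so equal-weight bundles induce an almost-uniform marginal on the bundle name when the input walk is drawn uniformly from all length-$n$ walks. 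A greedy fill-by-weight argument inside each $W_v$ produces such a partition with at most one remainder bundle per middle vertex, and two applications of Lemma~\ref{lem_direct_mix} (one before position $a_i$ and one after) show that the bundle names at distinct milestones are nearly independent, so their concatenation is still close to uniform over $[M^\star]^m$.

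With this bundling in hand, I would store the sequence of $m$ bundle names using one \DPT{} dictionary; since the induced marginal is uniform to within a $(1\pm\exp(-\Omega(l)))$ factor, this part contributes $m\lg M^\star + O(1)$ bits. The remaining information (an intra-bundle selector per milestone, plus a block-encoding between consecutive milestones conditioned on their realized endpoints, via the integer-encoding scheme of Lemma~\ref{lem_ram_decode}) goes into a second \DPT{} dictionary. Summing the two contributions and picking $M^\star$ to cancel the $\lambda^n$ factor against $\bOne^\top A^n\bOne$ gives a total of $\lg(\bOne^\top A^n\bOne) + 5$ bits, where the constant $5$ absorbs the two \DPT{} ceilings and the $m\cdot\exp(-\Omega(l)) = O(1)$ spectral approximation error accumulated across milestones. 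To retrieve $v_q$, I locate the surrounding block, fetch the two neighboring bundle names and the block encoding via $O(1)$ \DPT{} probes, read off the two block endpoints from inside the bundles, and invoke Lemma~\ref{lem_ram_decode} on the block encoding; this totals $O(\lg\lg n / \lg\lg r)$ time on the word-RAM. The hardest part is the bundle construction itself: achieving the bundle-size equality in (i) uniformly across all middle vertices so that the redundancy stays $O(1)$ rather than $O(\lg n)$ demands exploiting the geometric decay in Lemma~\ref{lem_direct_mix} carefully enough that the per-bundle rounding errors never accumulate beyond an additive constant once summed over $m$ milestones.
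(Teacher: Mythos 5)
Your architecture matches the paper's: milestones with length-$2l$ windows, bundles of near-equal ``extension weight'' $\propto \pi_{x_0}\sigma_{x_{2l}}$ so that the bundle-name sequence is near-uniform, one \DPT{} instance for bundle names and one for the residual information, and Lemma~\ref{lem_ram_decode} for decoding inside a block. However, there are two concrete gaps. First, the ``greedy fill-by-weight'' partition of $W_v$ is not decodable within the claimed resources: $|W_v| = |G|^{\Theta(l)} = n^{\Theta(1)}$, so an unstructured partition needs a $\poly(n)$-bit table just to map a subwalk to its (bundle, within-bundle index) pair and back, whereas the theorem must work with an $r = \Theta(\lg^2 n)$-bit lookup table. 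The paper instead makes the partition \emph{arithmetic}: it splits each window at the milestone into two halves, encodes each half by its Lemma~\ref{lem_ram_decode} integer, and slices that integer range into $s_x \approx \sigma_x n^2/\langle\sigma,\bOne\rangle$ (resp.\ $t_x \approx \pi_x n^2/\langle\pi,\bOne\rangle$) contiguous groups; a bundle is then a product $\cB_{x,j_1,j_2} = h_{x,j_1}\times g_{x,j_2}$. Both the bundle index and the within-group index are then $O(1)$-time computable from the precomputed counts $\cN_{l'}(x,y)$, and Lemma~\ref{lem_power} shows each group has the right size up to a $(1\pm O(n^{-2}))$ factor (the $+1$ rounding per group being absorbed because $\lambda^l \geq n^4$). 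Your equal-weight criterion is the right invariant, but you need this explicit, computable realization of it.

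Second, your accounting for the second dictionary double-counts. Storing ``an intra-bundle selector per milestone'' already determines the whole walk given the bundle names (the windows tile the walk), so additionally storing ``a block-encoding between consecutive milestones'' would roughly double the dominant $n\lg\lambda$ term. The product structure of the bundles is exactly what resolves this: the within-bundle information at milestone $a_i$ splits into a left-half index $k_1^{(i)}$ and a right-half index $k_2^{(i)}$, and the second dictionary stores, per block, the triple consisting of the midpoint $v_{b_i}$ together with the two half-indices flanking it — each piece of information appears exactly once, and the number of such triples is $\frac{\langle\sigma,\bOne\rangle\langle\pi,\bOne\rangle}{\langle\sigma,\pi\rangle}\lambda^{a_{i+1}-a_i}n^{-4}(1\pm O(n^{-2}))$, which telescopes against the bundle count $\sum_x s_x t_x$ to give $\lg(\bOne^\top A^n\bOne)+5$. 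Without the product decomposition this telescoping, and hence the constant redundancy, does not go through.
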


The previous solution fails for general graph $G$.
The main reason is that if we take a uniformly random length-$n$ walk, the marginal distribution of each milestone is not uniform.
In other words, the entropy of each milestone is strictly less than $\lg |G|$.
If we store them as before, using $\lg |G|$ bits per milestone, this part of the data structure already introduces an unaffordable amount of redundancy (constant bits per milestone).

To circumvent this issue, we partition the set of subwalks near each milestone into subsets, which we refer to as the \emph{bundles}.
More specifically, let $l=\Theta(\lg n)$ be a parameter.
For each milestone $v_i$, we partition the set of all possible subwalks from $v_i$ to $v_{i+l}$ into groups $\{g_{x,j}\}$, such that
\begin{enumerate}
	\item all subwalks in $g_{x,j}$ have $v_i=x$;
	\item the size $|g_{x,j}|$ is roughly the same for all $x$ and $j$;
	\item if we sample a uniformly random walk from $g_{x,j}$, the marginal distribution of $v_{i+l}$ is roughly the same for all $x$ and $j$.
\end{enumerate}
The second property above implies that if the input walk is random, then the group $g_{x,j}$ that contains the subwalk is uniformly random.
The third property implies that which group $g_{x,j}$ contains the subwalk almost does not reveal anything about the walk after $v_{i+l}$, as the distribution already ``mixes'' at $v_{i+l}$.
We then partition the set of subwalks from $v_{i-l}$ to $v_i$ into groups $\{h_{x,j}\}$ with the similar properties.
Finally, a bundle $\cB_{x,j_1,j_2}$ will be the product set of $h_{x,j_1}$ and $g_{x,j_2}$, i.e., it is obtained by concatenating one subwalk from $h_{x,j_1}$ and one from $g_{x,j_2}$ (they both have $v_i=x$, and thus, can be concatenated).
The bundles constructed in this way have the following important properties:
\begin{enumerate}
	\item\label{item1} all subwalks in each bundle go through the same vertex at the milestone;
	\item\label{item2} each bundle consists of approximately the same number of subwalks;
	\item\label{item3} adjacent bundles are almost ``independent''.
\end{enumerate}

The data structure first stores for each milestone $v_i$, which bundle contains the subwalk near it, using \DPT{} (Item~\ref{item2} and \ref{item3} above guarantee that this part introduces less than one bit of redundancy).
Next, the data structure stores the exact subwalk between each pair of adjacent milestones, given bundles that contain (part of) it.
In the following, we elaborate on this idea, and present the details of the data structure construction.
To prove the theorem, we will use the following property about strongly connected aperiodic graphs, which is a corollary of Lemma~\ref{lem_direct_mix}.

\begin{lemma}\label{lem_power}
	Let $\lambda$ be the largest eigenvalue of $A$, $\pi^{\top}$ and $\sigma$ be its left and right eigenvectors, i.e., $\pi^{\top}A=\lambda\pi^{\top}$ and $A\sigma=\lambda\sigma$.
	Then $\sigma$ and $\pi$ both have positive coordinates.
	For large $l\geq \Omega(\lg n)$, we have the following approximation on $X_1^{\top}A^l X_2$ for vectors $X_1$ and $X_2$:
	\[
		X_1^{\top}A^l X_2=\lambda^l\cdot \left(\frac{\left<\sigma,X_1\right>\cdot\left<\pi,X_2\right>}{\left<\sigma,\pi\right>}\pm O(n^{-2}\cdot\|X_1\|_2\|X_2\|_2)\right).
	\]
\end{lemma}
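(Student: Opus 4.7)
The plan is to derive Lemma~\ref{lem_power} as a direct corollary of Lemma~\ref{lem_direct_mix} together with the Perron--Frobenius theorem, already invoked in the preliminaries. The hard work (the spectral decomposition and the exponential decay of the non-principal component) is packaged in Lemma~\ref{lem_direct_mix}; what remains here is essentially a bookkeeping conversion from an $\exp(-\Omega(l))$ error term to an $O(n^{-2})$ error term under the hypothesis $l \geq \Omega(\lg n)$, plus noting the sign properties of the eigenvectors.

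First, I would dispatch the positivity claim. Since $G$ is strongly connected and aperiodic, its adjacency matrix $A$ is a nonnegative irreducible aperiodic matrix, so by the Perron--Frobenius theorem (as stated in the preliminaries), $\lambda$ is a simple eigenvalue equal to the spectral radius of $A$, and both the left and right eigenvectors $\pi$ and $\sigma$ associated with $\lambda$ have strictly positive coordinates. This also justifies that $\langle \sigma,\pi\rangle > 0$ so the denominator in the statement is well-defined.

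Next, for the quantitative approximation, I would apply Lemma~\ref{lem_direct_mix} verbatim to obtain
\[
X_1^{\top}A^l X_2 \;=\; \lambda^l\cdot\left(\frac{\langle\sigma,X_1\rangle\langle\pi,X_2\rangle}{\langle\sigma,\pi\rangle} \;\pm\; O(\|X_1\|_2\|X_2\|_2)\cdot \exp(-\Omega(l))\right).
\]
The hidden constant in the $\exp(-\Omega(l))$ factor depends only on the spectral gap of $D^{1/2}(A^\top)^c D^{-1}A^c D^{1/2}$ (from the proof of Lemma~\ref{lem_direct_mix}), which in turn depends only on the fixed graph $G$. Call this constant $c_G > 0$, so the error factor is at most $e^{-c_G l}$ for all $l$ larger than some graph-dependent threshold.

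Finally, I would choose the implicit constant in $l \geq \Omega(\lg n)$ to make the conversion work: taking $l \geq (2/c_G)\ln n$ guarantees $\exp(-\Omega(l)) \leq n^{-2}$, which yields exactly the bound claimed,
\[
X_1^{\top}A^l X_2 \;=\; \lambda^l\cdot\left(\frac{\langle\sigma,X_1\rangle\langle\pi,X_2\rangle}{\langle\sigma,\pi\rangle} \;\pm\; O(n^{-2}\cdot\|X_1\|_2\|X_2\|_2)\right).
\]
The only subtlety worth flagging is that the constant inside $\Omega(\lg n)$ in the hypothesis must be taken large enough in terms of the spectral gap of $G$; since $G$ is a fixed graph, this is absorbed into the $\Omega(\cdot)$ notation and does not cause any trouble. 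There is no real obstacle here beyond verifying that the $\exp(-\Omega(l))$ bound from Lemma~\ref{lem_direct_mix} has a graph-dependent (but $n$-independent) constant in the exponent, which is clear from its proof.
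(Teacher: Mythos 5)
Your proposal matches the paper's treatment exactly: the paper presents Lemma~\ref{lem_power} as a direct corollary of Lemma~\ref{lem_direct_mix}, with the positivity of $\pi$ and $\sigma$ coming from the Perron--Frobenius theorem and the $\exp(-\Omega(l))$ error absorbed into $O(n^{-2})$ by taking the constant in $l\geq\Omega(\lg n)$ large enough relative to the (graph-dependent, $n$-independent) spectral gap. Your bookkeeping of the constant is a correct and slightly more explicit rendering of the same argument.
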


\begin{proof}[Proof of Theorem~\ref{thm_general}]

	\bigskip
	Let $m=\Theta(n/\lg n)$ be an integer, such that $l=n/2m$ is large enough for Lemma~\ref{lem_power} and $\lambda^l\geq n^4$.
	Similar to the regular graph case, we divide the walk into $m$ blocks of roughly equal length.
	For $i=0,\ldots,m$, let $a_i=\lfloor 2il\rfloor$.
	The $i$-th block is from the $a_{i-1}$-th vertex in the walk to the $a_i$-th, and each $a_i$ is a milestone.
	Let $b_i$ be the midpoint between the milestones $a_i$ and $a_{i+1}$, i.e., $b_i=\lfloor (2i+1)l \rfloor$.

	Consider the set of all possible subwalks from $b_{i-1}$ to $b_i$, we will partition this set into \emph{bundles}, such that all subwalks in each bundle has the same $v_{a_i}$ (different bundles may \emph{not} necessarily have different vertex at $a_i$).
	To formally define the bundles, let us first consider the subwalk from $v_{a_i}$ to $v_{b_i}$, which has length $b_i-a_i\approx l$.
	In total, there are $\bOne^{\top}A^{b_i-a_i}\bOne$ subwalks of length $b_i-a_i$.
	For any vertices $x, y$, $\bE_x^{\top}A^{b_i-a_i}\bOne$ of them have $v_{a_i}=x$, and $\bE_x^{\top}A^{b_i-a_i}\bE_y$ have $v_{a_i}=x$ and $v_{b_i}=y$.
	Using the notation from Lemma~\ref{lem_ram_decode}, we have $\cN_{b_i-a_i}(x,y)=\bE_x^{\top}A^{b_i-a_i}\bE_y$.

	Now we partition the set of all possible subwalks from $v_{a_i}$ to $v_{b_i}$ such that $v_{a_i}=x$ into
	\[
		s_x:=\left\lfloor \frac{\bE_x^{\top}A^{b_i-a_i}\bOne}{\bOne^{\top}A^{b_i-a_i}\bOne}\cdot n^2 \right\rfloor 
	\]
	groups $g_{x,1},\ldots,g_{x,s_x}$.
	By Lemma~\ref{lem_power}, we have
	\begin{align}
		s_x&=\frac{\left<\sigma,\bE_x\right>\left<\pi,\bOne\right>\lambda^{b_i-a_i}}{\left<\sigma,\bOne\right>\left<\pi,\bOne\right>\lambda^{b_i-a_i}}\cdot n^2\left(1\pm O(n^{-2})\right) \nonumber \\
		&=\frac{\sigma_x}{\left<\sigma,\bOne\right>}\cdot n^2(1\pm O(n^{-2}))\label{eqn_sx}.
	\end{align}

	For each vertex $y$, we ensure that the number of subwalks with $v_{b_i}=y$ in each $g_{x,j}$ for $j\in[s_x]$ is approximately the same.
	More specifically, we use Lemma~\ref{lem_ram_decode} to encode the subwalk from $v_{a_i}$ to $v_{b_i}$ using an integer $K_2^{(i)}\in [\cN_{b_i-a_i}(v_{a_i},v_{b_i})]$.
	This subwalk belongs to $g_{x,j}$ for $x=v_{a_i}$ and 
	\begin{equation}\label{eqn_j2}
		j=\left\lfloor \frac{(K_2^{(i)}-1)\cdot s_x}{\cN_{b_i-a_i}(v_{a_i},v_{b_i})} \right\rfloor+1.
	\end{equation}
	Therefore, by Lemma~\ref{lem_power} and Equation~\eqref{eqn_sx} for every $x,y,j$, the number of subwalks from $x$ to $y$ in $g_{x,j}$ is at most
	\begin{align}
		\frac{\cN_{b_i-a_i}(x,y)}{s_x}+1&= \frac{\sigma_x\pi_y}{\left<\sigma,\pi\right>}\cdot \lambda^{b_i-a_i}\cdot \frac{\left<\sigma,\bOne\right>}{\sigma_x}\cdot n^{-2} (1\pm O(n^{-2})) \nonumber\\
		&= \frac{\left<\sigma,\bOne\right>\cdot \pi_y}{\left<\sigma,\pi\right>}\cdot \lambda^{b_i-a_i}n^{-2} (1\pm O(n^{-2})).\label{eqn_gxy}
	\end{align}
	Note that it is important that the $+1$ term is absorbed into the $O(n^{-2})$ term, since $\lambda^{b_i-a_i}=\Theta(\lambda^{l})\geq \Omega(n^4)$.

	Similarly, we also partition the set of subwalks from $v_{b_{i-1}}$ to $v_{a_i}$ into groups $h_{x, 1},\ldots,h_{x,t_x}$, for 
	\begin{equation}
		t_x:=\left\lfloor \frac{\bOne^{\top}A^{a_i-b_{i-1}}\bE_x}{\bOne^{\top}A^{a_i-b_{i-1}}\bOne}\cdot n^2 \right\rfloor =\frac{\pi_x}{\left<\pi,\bOne\right>}\cdot n^2(1\pm O(n^{-2})).\label{eqn_tx}
	\end{equation}
	Again, we use Lemma~\ref{lem_ram_decode} to encode the subwalk from $v_{b_{i-1}}$ to $v_{a_i}$ using an integer $K_{1}^{(i)}\in [\cN_{a_i-b_{i-1}}(v_{b_{i-1}},v_{a_i})]$.
	This subwalk belongs to $h_{x,j}$ for $x=v_{a_i}$ and 
	\begin{equation}\label{eqn_j1}
		j=\left\lfloor \frac{(K_{1}^{(i)}-1)\cdot t_x}{\cN_{a_i-b_{i-1}}(v_{b_{i-1}},v_{a_i})} \right\rfloor+1.
	\end{equation}
	The number of subwalks from $y$ to $x$ in $h_{x,j}$ is at most
	\begin{equation}\label{eqn_hxy}
		\frac{\cN_{a_i-b_{i-1}}(y,x)}{t_x}+1= \frac{\sigma_y\cdot \left<\pi,\bOne\right>}{\left<\sigma,\pi\right>}\cdot \lambda^{a_i-b_{i-1}}n^{-2} (1\pm O(n^{-2})).
	\end{equation}


	Now we describe how we partition the subwalks from $v_{b_{i-1}}$ to $v_{b_i}$ into bundles.
	Each bundle has the form $h_{x,j_1}\times g_{x,j_2}$ for some vertex $x$ and $j_1\in[t_x],j_2\in[s_x]$.
	That is, we define the bundle $\cB_{x,j_1,j_2}$ as
	\[
		\cB_{x,j_1,j_2}:=\{p_1\circ p_2: p_1\in h_{x,j_1},p_2\in g_{x,j_2}\},
	\]
	where $p_1\circ p_2$ concatenates the two paths.
	It is not hard to verify that $\left\{\cB_{x,j_1,j_2}\right\}$ is a partition, and the number of different bundles is
	\begin{equation}\label{eqn_num_bundle}
		\sum_x s_xt_x=\frac{\left<\sigma,\pi\right>}{\left<\sigma,\bOne\right>\left<\pi,\bOne\right>}\cdot n^4(1\pm O(n^{-2}))
	\end{equation}
	by Equation~\eqref{eqn_sx} and \eqref{eqn_tx}.
	In particular, the index of the bundle $(x, j_1, j_2)$ can be encoded using the integer
	\[
		\sum_{x'<x} s_{x'}t_{x'}+(j-1)t_x+j_2.
	\]

	As special cases, for $i=0$, the subwalk is from $v_{a_0}$ to $v_{b_0}$, and the bundles are defined to be $\cB_{x,j_2}:=g_{x,j_2}$ for vertex $x$ and $j_2\in[s_x]$.
	For $i=m$, the bundles are $\cB_{x,j_1}:=h_{x,j_1}$ for vertex $x$ and $j_1\in[t_x]$.
	In both cases, the number of bundles is at most $n^2$.

	\paragraph{Data structure construction and space analysis.} Now, we are ready to describe how to construct the data structure from the input $(v_0,\ldots,v_n)$.
	We first use Lemma~\ref{lem_ram_decode} to encode the subwalks from $v_{a_i}$ to $v_{b_i}$ obtaining $K_2^{(i)}$, and subwalk from $v_{a_{i-1}}$ to $v_{b_i}$ obtaining $K_1^{(i)}$, and compute the $m+1$ bundles consisting of each subwalk.
	More specifically, we compute $j_1^{(1)},\ldots,j_1^{(m)}$ and $j_2^{0},\ldots,j_2^{(m-1)}$ using Equation~\eqref{eqn_j1} and \eqref{eqn_j2}.

	The first part of the data structure stores the indices of the $m+1$ bundles using \DPT{}.
	These indices can be stored using
	\[
		\left\lceil 2\lg n^2+(m-1)\lg \left(\sum_x s_xt_x\right) \right\rceil+1
	\]
	bits of space, which by Equation~\eqref{eqn_num_bundle} is at most
	\begin{equation}\label{eqn_space1}
		4\lg n+(m-1)\lg\frac{\left<\sigma,\pi\right>\cdot n^4}{\left<\sigma,\bOne\right>\left<\pi,\bOne\right>}+O(m\cdot n^{-2})+2.
	\end{equation}

	\bigskip
	Next, we compute the index of the subwalk \emph{within} each bundle.
	More specifically, we compute $k_1^{(1)},\ldots,k_1^{(m)}$ and $k_2^{0},\ldots,k_2^{(m-1)}$ as follows:
	\[
		k_1^{(i)}=K_1^{(i)}-\left\lceil(j_1^{(i)}-1)\cdot \frac{\cN_{a_i-b_{i-1}}(v_{b_{i-1}},v_{a_i})}{s_x}\right\rceil,
	\]
	and
	\[
		k_2^{(i)}=K_2^{(i)}-\left\lceil(j_2^{(i)}-1)\cdot \frac{\cN_{b_i-a_{i}}(v_{a_{i}},v_{b_i})}{t_x}\right\rceil.
	\]
	They are the indices \emph{within} each $g_{x,j}$ and $h_{x,j}$.

	The second part of the data structure stores the subwalk between the milestones within the bundles.
	In particular, we store the triples $(v_{b_i},k_1^{(i)},k_2^{(i)})$ for every $i=0,\ldots,m-1$.
	By Equation~\eqref{eqn_gxy} and \eqref{eqn_hxy}, the number of triples is at most
	\begin{align*}
		&\sum_{y} \frac{\left<\sigma,\bOne\right>\pi_y}{\left<\sigma,\pi\right>}\cdot \lambda^{b_i-a_i}n^{-2}\cdot \frac{\sigma_y\left<\pi,\bOne\right>}{\left<\sigma,\pi\right>}\cdot \lambda^{a_{i+1}-b_i}\cdot n^{-2}\cdot (1\pm O(n^{-2})) \\
		&=\frac{\left<\sigma,\bOne\right>\left<\pi,\bOne\right>}{\left<\sigma,\pi\right>}\cdot \lambda^{a_{i+1}-a_i} n^{-4}(1\pm O(n^{-2})).
	\end{align*}


	Again, we store these triples using \DPT{}.
	The total space of the second part is at most
	\begin{align}\label{eqn_space2}
		m\lg \frac{\left<\sigma,\bOne\right>\left<\pi,\bOne\right>}{\left<\sigma,\pi\right>n^4}+n\lg\lambda+O(m\cdot n^2)+2.
	\end{align}

	Finally, summing up Equation~\eqref{eqn_space1} and \eqref{eqn_space2}, the total space usage of the data structure is at most
	\begin{align*}
		&\, 4\lg n+(m-1)\lg\frac{\left<\sigma,\pi\right>\cdot n^4}{\left<\sigma,\bOne\right>\left<\pi,\bOne\right>}+m\lg \frac{\left<\sigma,\bOne\right>\left<\pi,\bOne\right>}{\left<\sigma,\pi\right>n^4}+n\lg\lambda+O(m\cdot n^{-2})+4 \\
		\leq&\, \lg \frac{\left<\sigma,\bOne\right>\left<\pi,\bOne\right>}{\left<\sigma,\pi\right>}+n\lg\lambda+O(m\cdot n^{-2})+4 \\
		=&\, \lg\left( \frac{\left<\sigma,\bOne\right>\left<\pi,\bOne\right>}{\left<\sigma,\pi\right>}\cdot\lambda^n \right)+O(m\cdot n^{-2})+4 \\
		\intertext{which by Lemma~\ref{lem_power} again, is at most}
		\leq &\, \lg (\bOne^{\top}A^n\bOne)+5.
	\end{align*}

	\paragraph{Query algorithm.} Given an integer $q\in[0, n]$, we first compute the block that contains $v_t$, suppose $a_i\leq q\leq a_{i+1}$.
	Suppose $q$ is in the first half ($a_i\leq q\leq b_i$), we use the query algorithm of \DPT{} on the first part of the data structure, retrieving the index of the bundle containing the subwalk from $b_{i-1}$ to $b_i$, $(v_{a_i},j_1^{(i)},j_2^{(i)})$.
	Similarly, we retrieve the triple $(v_{b_i},k_1^{(i)},k_2^{(i)})$ from the second part.
	Then the encoding of the subwalk from $v_{a_i}$ to $v_{b_i}$ can be computed 
	\[
		K_2^{(i)}=k_2^{(i)}+\left\lceil(j_2^{(i)}-1)\cdot \frac{\cN_{b_i-a_{i}}(v_{a_{i}},v_{b_i})}{t_x}\right\rceil.
	\]
	Finally, we use the query algorithm of Lemma~\ref{lem_ram_decode} to decode the $(q-a_i)$-th vertex in this subwalk, which is $v_t$.
	The case where $q$ is in the second of the block ($b_i\leq q\leq a_{i+1}$) can be handled similarly.
	With an extra lookup table of size $r$ for Lemma~\ref{lem_ram_decode}, each query can be answered in $O(\frac{\lg\lg n}{\lg\lg r})$ time.
	This proves the theorem.
\end{proof}

The above data structure also extends to general directed graphs.

\begin{corollary}\label{cor_sc}
	With an extra $O(1)$ bits of space, Theorem~\ref{thm_general} also applies to general strongly connected graph $G$.
\end{corollary}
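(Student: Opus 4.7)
The strategy is to reduce to the aperiodic case already handled by Theorem~\ref{thm_general} by passing to a suitable power of the adjacency matrix. Let $p$ denote the period of $G$, i.e., the greatest common divisor of the lengths of all directed cycles; since $G$ is a fixed finite strongly connected graph, $p$ is a constant depending only on $G$. Classical structure theory decomposes the vertex set into $p$ cyclic classes $V_0,\ldots,V_{p-1}$ with every directed edge going from some $V_c$ into $V_{(c+1) \bmod p}$. Consequently, once the starting class $c_0$ with $v_0 \in V_{c_0}$ is fixed, the class $c_i = (c_0+i)\bmod p$ of every $v_i$ is determined. The plan is to store $c_0$ explicitly using $\lceil \lg p\rceil = O(1)$ bits and then mimic the construction of Theorem~\ref{thm_general} using $A^p$ in place of $A$.

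Concretely, I would replay the construction of Theorem~\ref{thm_general} with milestones $a_i$ and half-milestones $b_i$ forced to be multiples of $p$. Rounding each to the nearest multiple of $p$ shifts its position by at most $p-1 = O(1)$, which is negligible compared to blocks of length $\Theta(\lg n)$. With this choice, every subwalk between consecutive (half-)milestones stays inside a single cyclic class under the $A^p$-dynamics. The restriction $A^p|_{V_c}$ is the adjacency matrix of a strongly connected \emph{aperiodic} graph on $V_c$ (whose edges are length-$p$ walks in $G$), so Lemma~\ref{lem_power} applies to it: its top eigenvalue $\lambda^p$ has multiplicity $1$, with all other eigenvalues strictly smaller in magnitude. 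The definitions of the group sizes $s_x,t_x$, the bundles $\cB_{x,j_1,j_2}$, and the within-bundle triples carry over verbatim after replacing the matrix products $A^l$ with $A^{pl'}|_{V_{c_0}}$ for the appropriate $l'$, and all $(1 \pm O(n^{-2}))$ approximations remain valid.

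The space analysis then proceeds exactly as before. The count $\bOne^\top A^n \bOne$ factors as a sum over starting classes $c_0$ of the number of length-$n$ walks originating in $V_{c_0}$, and each summand is well-approximated by the $A^p$-analysis of the relevant class. The final bound is $\lg(\bOne^\top A^n \bOne) + O(1)$ bits, where the $O(1)$ absorbs (i) the $\lceil \lg p\rceil$ bits spent on $c_0$, (ii) the $5$-bit redundancy of Theorem~\ref{thm_general}, and (iii) the constant rounding error from aligning milestones to multiples of $p$. The query algorithm is unchanged except for the initial step of reading $c_0$ and rounding the query index to the appropriate block, so the decoding time $O(\frac{\lg\lg n}{\lg\lg r})$ is preserved.

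The principal obstacle is that in a periodic graph $A$ itself has $p$ distinct eigenvalues of maximum modulus $\lambda$, namely $\lambda \omega^j$ for $\omega$ a primitive $p$-th root of unity, so Lemma~\ref{lem_power} fails for $A$ directly: the ``leading'' term is no longer a single rank-one approximation but a sum of $p$ oscillating contributions. The remedy is precisely to pass to $A^p$ restricted to a single cyclic class, which collapses all $p$ leading eigenvalues of $A$ to a single eigenvalue $\lambda^p$ of $A^p|_{V_c}$ with multiplicity $1$; this is the reason milestones must be spaced in multiples of $p$. Since $p$ is a constant independent of $n$, all the resulting alignment and class-bookkeeping overheads are absorbed into the additive $O(1)$ term, yielding the corollary.
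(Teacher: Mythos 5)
Your reduction is built on the same idea as the paper's --- exploit the cyclic--class decomposition $V_0,\ldots,V_{p-1}$ and the fact that the $p$-th power restricted to a single class is primitive, so Perron--Frobenius gives a unique top eigenvalue $\lambda^p$ --- but the execution differs. The paper performs a \emph{black-box} reduction: it defines a new graph $G^p|_{V_1}$ whose \emph{vertices are the length-$p$ subwalks} starting and ending in $V_1$ (with an edge when two subwalks concatenate), observes this graph is strongly connected and aperiodic, and invokes Theorem~\ref{thm_general} on it verbatim, handling a non-aligned prefix/suffix of length $<p$ naively. You instead white-box the proof, keeping the original walk and re-deriving the spectral estimates with $A^{pl'}|_{V_{c_0}}$ in place of $A^l$. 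Your route can be made to work, but be careful with the phrase ``the adjacency matrix of a graph on $V_c$ whose edges are length-$p$ walks'': $(A^p)_{uv}$ can exceed $1$, so that object is a multigraph and a vertex sequence on it does \emph{not} determine the original walk --- $A^p|_{V_c}$ can only serve as an analytical device for the counting estimates (which is fine if, as you say, the bundles and within-bundle encodings via Lemma~\ref{lem_ram_decode} still operate on genuine length-$l$ subwalks of $G$), whereas the paper's ``subwalks as vertices'' construction sidesteps this entirely and also spares you from re-checking that every $(1\pm O(n^{-2}))$ approximation survives the class restriction. You also need to normalize quantities like $s_x$ by counts of walks \emph{originating in $V_{c_0}$} rather than $\bOne^\top A^{b_i-a_i}\bOne$, and to store the sub-$p$-length leftover at the end of the walk explicitly; both are $O(1)$-bit issues that your error accounting implicitly covers. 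Net: same decomposition, the paper's version is the cleaner modular one, yours requires (and mostly supplies) more verification.
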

\begin{proof}
	Suppose $G$ is periodic with period $p$.
	Then its vertices can be divided into $p$ sets $V_1,\ldots,V_p$ such that each vertex $V_i$ only has outgoing edges to $V_{i+1}$ (defining $V_{p+1}=V_1$).
	We can reduce the problem to the aperiodic case as follows.

	Let us first only consider walks that both start and end with vertices in $V_1$.
	Therefore, the length of the walk $n$ must be a multiple of $p$.
	We divide the walk into subwalks of length $p$, and view each length-$p$ subwalk as a vertex in a new graph.
	That is, consider the graph $G^p|_{V_1}$ whose vertices are all possible length-$p$ walks that start and end with vertices in $V_1$ in $G$, such that directed edges connect two subwalks if they can be concatenated, i.e., if the last vertex in $u$ is the same as the first vertex in $u'$, then there is an edge from $u$ to $u'$.
	It is easy to verify that this graph is strongly connected and aperiodic, and a length-$n$ walk on $G$ can be viewed as a length-$n/p$ walk on this graph.
	Then we apply Theorem~\ref{thm_general} to store the length-$n/p$ walk on $G^p|_{V_1}$.
	To retrieve $v_q$, it suffices to retrieve the $\lfloor q/p\rfloor$-th vertex in the walk on $G^p|_{V_1}$, and recover $v_q$ via a look-up table.

	For general inputs that do not necessarily start or end in $V_1$, we store a prefix and suffix of walk of length at most $p$ naively, such that the remaining middle part starts and ends in $V_1$, for which we use the above construction.
\end{proof}

Finally, our data structure also applies to general directed graphs.
\begin{corollary}
	With an extra $O(\lg n)$ bits of space, Theorem~\ref{thm_general} also applies to general directed graph $G$.
\end{corollary}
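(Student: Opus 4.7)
The plan is to extend Corollary~\ref{cor_sc} via an SCC decomposition of $G$. Let $C_1, \ldots, C_s$ be the strongly connected components of $G$; since $|G| = O(1)$ we have $s = O(1)$, and the SCCs form a DAG. Any length-$n$ walk visits a sequence of SCCs in topological order (once an SCC is left it cannot be revisited), so $(v_0, \ldots, v_n)$ decomposes into $k \le s = O(1)$ maximal contiguous sub-walks, each lying entirely in a single SCC, separated by at most $k-1$ cross-SCC transition edges.

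I would build the data structure in two parts. The first stores the ``schedule'' explicitly: the sequence of SCCs visited ($O(1)$ bits), the $k-1$ transition times ($O(k \lg n) = O(\lg n)$ bits), and the identities of the transition edges ($O(1)$ bits). The second encodes each sub-walk in isolation: for a sub-walk of length $n_j \ge c\lg n$ inside $C_{i_j}$ (for a suitable constant $c$), invoke Corollary~\ref{cor_sc} on the strongly-connected subgraph $C_{i_j}$, costing $\lg(\bOne^\top A_{C_{i_j}}^{n_j}\bOne) + O(1)$ bits; each of the at most $O(1)$ short sub-walks is written out vertex-by-vertex in $O(n_j \lg|G|) = O(\lg n)$ bits. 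Queries are answered by comparing $q$ to the stored transition times to locate the sub-walk containing $v_q$, then invoking the corresponding sub-walk data structure; this adds only $O(1)$ overhead and preserves the $O(\lg\lg n / \lg\lg r)$ query time.

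The main obstacle, and the bulk of the analysis, is the space inequality $\sum_{j\,:\,n_j \ge c\lg n} \lg(\bOne^\top A_{C_{i_j}}^{n_j}\bOne) \le \lg(\bOne^\top A^n \bOne) + O(1)$. To establish it, fix the schedule followed by the input walk and lower-bound the number of walks in $G$ consistent with that schedule; letting $u_j, v_j$ denote the entry and exit vertices of $C_{i_j}$ (determined by the transition edges), this count equals
\[
    \bOne^\top A_{C_{i_1}}^{n_1}\bE_{v_1} \cdot \prod_{j=2}^{k-1} \bE_{u_j}^\top A_{C_{i_j}}^{n_j} \bE_{v_j} \cdot \bE_{u_k}^\top A_{C_{i_k}}^{n_k}\bOne.
\]
Since $|C_{i_j}| = O(1)$ and its Perron left/right eigenvectors have entries bounded away from $0$, Lemma~\ref{lem_power} (composed with the aperiodic reduction of Corollary~\ref{cor_sc} when $C_{i_j}$ is periodic) shows that each factor $\bE_u^\top A_{C_{i_j}}^{n_j} \bE_v$ lies within a constant multiplicative factor of $\bOne^\top A_{C_{i_j}}^{n_j}\bOne$. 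Since the consistent-walk count is at most $\bOne^\top A^n\bOne$, the inequality follows with $O(1)$ slack per long sub-walk, hence $O(1)$ slack in total. Combined with the $O(\lg n)$ cost for the schedule and the $O(\lg n)$ cost for the verbatim short sub-walks, this yields total space $\lg(\bOne^\top A^n\bOne) + O(\lg n)$. The mild restriction $n_j = \Omega(\lg n)$ imposed by Lemma~\ref{lem_power} is precisely what forces the verbatim treatment of short sub-walks and the resulting $O(\lg n)$ (rather than $O(1)$) redundancy.
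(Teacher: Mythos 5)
Your proposal is correct and follows essentially the same route as the paper: decompose the walk by strongly connected components, store the $O(1)$ transition steps in $O(\lg n)$ bits, and apply Corollary~\ref{cor_sc} to each sub-walk within a single SCC. The paper's proof is a three-sentence sketch that omits the space accounting entirely; your verification via Lemma~\ref{lem_power} that $\sum_j \lg(\bOne^{\top} A_{C_{i_j}}^{n_j}\bOne) \le \lg(\bOne^{\top}A^n\bOne)+O(1)$ for the long sub-walks, and your verbatim treatment of the short ones, supply details the paper takes for granted rather than a different argument.
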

\begin{proof}
	We precompute the strongly connected components (SCC) of $G$.
	Given input walk $(v_0,\ldots,v_n)$, observe that it can only switch between different SCCs a constant number of times, since after the walk leaves an SCC, it could never come back.
	Hence, we first store the steps in the input that switch from one SCC to another using $O(\lg n)$ bits, then apply Corollary~\ref{cor_sc} on each subwalk within an SCC.
\end{proof}

\subsection{Matching the point-wise optimal space for non-regular graphs}
The data structure from Theorem~\ref{thm_general} used approximately $\lg \bOne^\top A_G\bOne$ bits of space uniformly on \emph{any} possible input sequence. This means that our data structure achieves the best possible space for a \emph{uniformly chosen} 
length-$n$  walk in $G$, by Shannon's source coding theorem. 
However, the distribution we actually care about is that of a random walk of length $n$ starting from a random vertex in $G$, 
which for \emph{non-regular} graphs may have much lower Shannon entropy.
In the other words, for non-regular $G$, the following two distributions may be very far in KL divergence:   
\begin{itemize}
	\item $\nu_n$: A uniformly chosen length-$n$ walk in $G$.  
	\item $\mu_n$: An $n$-step random walk starting from a random vertex in $G$. 
\end{itemize}
By Huffman coding, $H(\mu_n)$ is therefore the correct \emph{expected} space benchmark. 
To achieve this expectation, the corresponding \emph{poin-wise} space benchmark per walk 
$(v_0,\ldots,v_n) \sim \mu_n$ in $G$ is 
\[
	\lceil\lg |G|+\sum_{i=0}^{n-1}\lg \deg(v_i)\rceil
\]
bits of space, since the space allocated to each sequence $(v_0,\ldots,v_n)$ is approximately $1/\lg \Pr[(v_0,\ldots,v_n)]$.

\bigskip

To match the above point-wise space bound, we use the \emph{augmented $B$-tree} from~\cite{Pat08}.
Fix a parameter $B\geq 2$, \cite{Pat08} defines data structure \emph{aB-trees} as follows:
\begin{itemize}
	\item The data structure stores an array $A\in \Sigma^n$. The data structure is a $B$-ary tree with $n$ leaves storing elements in $A$.
	\item Every node is augmented with a label from some alphabet $\Phi$, such that the label of the $i$-th leaf is a function of $A[i]$, and the label of an internal node is a function of the labels of its $B$ children, and the size of the subtree.
	\item The query algorithm examines the label of the root's children, decides which child to recurse on, examines all labels of that child's children, recurses to one of them, and so on.
	The algorithm must output the query answer when it reaches a leave.
\end{itemize}
\Pat proves a general theorem to compress \emph{any} such aB-tree, almost down to its input entropy:

\begin{theorem}[\cite{Pat08}, Theorem 8]\label{thm_abtree}
	Let $B\leq O(\frac{w}{\lg (n+|\Phi|)})$, and let $\cN(n, \varphi)$ be the number of instances of $A\in \Sigma^n$ that has the root labeled by $\varphi$. An aB-tree of size $n$ with root label $\varphi$ can be stored using $\lg_2 \cN(n, \varphi)+2$ bits. The query time is $O(\lg_B n)$, assuming a precomputed lookup tables of $O(|\Sigma|+|\Phi|^{B+1}+B\cdot |\Phi|^B)$ words, which only depend on $n$, $B$ and the aB-tree algorithm.
\end{theorem}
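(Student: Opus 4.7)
The plan is to prove Theorem \ref{thm_abtree} by recursive mixed-radix encoding of the aB-tree into a single integer in $[\cN(n,\varphi)]$, combined with a table-driven ``unrank'' procedure for decoding. The key conceptual point is that we never pay a per-level rounding loss: the whole tree is encoded as one integer, and only the final conversion to a bit string costs $O(1)$ bits.

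First, I would set up the recursive encoding. At an internal node with label $\varphi$ and subtree size $n$, let $\cT(\varphi,n)$ denote the set of valid child-profiles $\mathbf{t}=((\varphi_1,n_1),\ldots,(\varphi_B,n_B))$, i.e., tuples such that $\sum_i n_i = n$ and the label rule of the aB-tree, applied to $(\varphi_1,\ldots,\varphi_B)$ and the $n_i$'s, reproduces $\varphi$. By definition of $\cN$, we have $\cN(n,\varphi) = \sum_{\mathbf{t}\in\cT(\varphi,n)} \prod_i \cN(n_i,\varphi_i)$. Fix an arbitrary total order on $\cT(\varphi,n)$ (the same order will be used in the decoder). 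Given an input array with actual profile $\mathbf{t}^* = ((\varphi_1^*,n_1^*),\ldots,(\varphi_B^*,n_B^*))$ at the root, recursively encode each child subtree into $K_i \in [\cN(n_i^*,\varphi_i^*)]$, and output the integer
\[
K \;=\; \sum_{\mathbf{t} \prec \mathbf{t}^*} \prod_{i=1}^{B} \cN(n_i,\varphi_i) \;+\; \sum_{i=1}^{B} (K_i-1)\prod_{j>i} \cN(n_j^*,\varphi_j^*) \;+\; 1.
\]
A straightforward induction shows $K$ ranges over exactly $[\cN(n,\varphi)]$ and is a bijection with arrays realizing root label $\varphi$. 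Writing $K$ down in binary uses $\lceil \lg_2 \cN(n,\varphi)\rceil \le \lg_2 \cN(n,\varphi) + 2$ bits.

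Next, I would design the query. Decoding must, at each level, (i) recover $\mathbf{t}^*$ from $K$, (ii) identify the child $i^*$ that contains the queried leaf using the cumulative sums $n_1^* + n_2^* + \cdots$, and (iii) extract $K_{i^*}$ to recurse. Step (i) is an unrank: find the largest $\mathbf{t}^* \in \cT(\varphi,n)$ with cumulative weight strictly less than $K$. This is a predecessor search over at most $|\Phi|^B \cdot n^{B-1}$ weighted keys, but the keys depend only on $(\varphi,n,B)$ and the aB-tree's label rule, not on the input array. So I would precompute, for every $(\varphi,n)$ with $n \le$ (subtree size ever encountered), a table indexed by $\mathbf{t}$ giving its cumulative weight, plus a constant-time predecessor structure over those weights. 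Step (ii) is a linear scan of $B$ prefix sums of $n_i^*$'s, read from the table in $O(B)$ time. Step (iii) is one division and one modulus, again using precomputed $\prod_{j>i}\cN$ values from the table. The constraint $B \le O(w/\lg(n+|\Phi|))$ ensures each label and each child-count fits in a word so these arithmetic operations cost $O(1)$, and ensures the table sizes of $O(|\Sigma|+|\Phi|^{B+1}+B|\Phi|^B)$ words suffice to hold all per-$(\varphi,n)$ tables.

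The recursion has depth $O(\lg_B n)$ and each level runs in $O(1)$, giving the claimed $O(\lg_B n)$ query time; the space bound follows from the single-integer encoding together with the ceiling in the final bit-string conversion. The main obstacle is step (i): naively, the predecessor search over up to $n^{B-1}|\Phi|^B$ weighted keys would cost $\Omega(\lg n)$ per level, destroying the tradeoff. The fix is the observation of \Pat that after sorting the tuples by their weight $\prod_i \cN(n_i,\varphi_i)$, the cumulative sums form a sequence with monotonically increasing gaps, and predecessor search over such a sequence admits a linear-space constant-time lookup. Since the sorted cumulative table is completely determined by $(\varphi,n,B)$ and the aB-tree algorithm, it is precomputed and charged to the lookup table, not to the data structure.
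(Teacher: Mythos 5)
This theorem is imported from \cite{Pat08} (Theorem 8 there); the paper does not reprove it. Measured against the actual argument in \cite{Pat08}, your reconstruction has a genuine gap in the query-time analysis. Your encoding packs the entire tree into a single integer $K\in[\cN(n,\varphi)]$, which has $\lg_2\cN(n,\varphi)=\Theta(n\lg|\Sigma|)$ bits, i.e., $\Theta(n/w)$ words. At the top levels of the recursion, your steps (i) and (iii) therefore require predecessor search, division, and modulus on $\Theta(n)$-bit integers; none of these is an $O(1)$-time word-RAM operation (even reading $K$ takes $\Theta(n/w)$ time), so the claimed ``$O(1)$ per level'' fails. You correctly identify that a naive predecessor search would be too slow, but the fix you invoke (monotone gaps, as in Lemma~\ref{lem_ram_decode} of this paper) only works when the keys and the query fit in $O(1)$ words --- which is exactly why that lemma restricts to walks of length $O(\lg n)$. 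It does not rescue arithmetic on $n$-bit ranks.

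The actual proof in \cite{Pat08} avoids this by never materializing the global rank. Each subtree is stored in a \emph{spill-over representation}: a block of full memory words laid out at a fixed, precomputable address, plus a small ``spill'' integer living in a universe of size at most some parameter $R=\mathrm{poly}(B,|\Phi|,w)$ that fits in $O(1)$ words. At each internal node, the children's spills and labels are combined into one bounded integer, which is then re-split into whole output words plus a new small spill; all per-level arithmetic is on $O(1)$-word quantities, so each level genuinely costs $O(1)$ time. The price is a rounding loss of roughly $O(1/R)$ bits (or $2/\!\lg e \cdot 1/R$, in Pătrașcu's accounting) per node rather than zero; summed over the tree and with $R$ chosen appropriately, this totals the $+2$ bits in the statement. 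Your single-integer scheme gets the space bound for free but cannot meet the time bound; the spill-over bookkeeping is not an optimization but the essential idea, and your proposal is missing it.
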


We build an aB-tree on top of the input. 
The alphabet size $|\Sigma|=|G|$.
For each node of the tree that corresponds to a subwalk $(v_l,\ldots,v_r)$, we set its label $\varphi$ to be the triple $(v_l, v_r, S)$, such that $S$ is an integer equal to
\[
	\sum_{i=l}^{r-1}\lceil n\lg_2 \deg(v_i)\rceil,
\]
\emph{with the exception of root}, whose label is only the integer $S=\sum_{i=0}^{n-1}\lceil n\lg_2 \deg(v_i)\rceil$, without $v_0$ and $v_n$.
In the other words, $S$ encodes the optimal ``space'' that this subsequence uses.
Thus, the alphabet size of the labels $|\Phi|$ is $O(n^2)$.
The label of a leaf $v_l$ is $(v_l, v_l, 0)$, a function of $v_l$.
To see why the label of an internal node is a function of the labels of its $B$ children, suppose the $B$ children have labels $(v_l,v_{m_1-1},S_1)$, $(v_{m_1},v_{m_2-1},S_2), \ldots$, $(v_{m_{B-1}},v_r,S_B)$ respectively.
Then the label of this node is $(v_l,v_r,S)$ (or just $S$ for the root) for 
\[
	S=\sum_{i=1}^B S_B+\sum_{i=1}^{B-1} \lceil n^2\lg_2\deg(v_{m_i-1})\rceil.
\]

Theorem~\ref{thm_abtree} compresses this data structure to $\lg_2 \cN(n+1, \varphi)+2$ bits, where $\phi$ is the label of the root, and supports queries in $O(\lg_B n)$ time.
The following lemma bounds the value of $\cN(n+1, \varphi)$.

\begin{lemma}\label{lem_space_abtree}
For root label $\varphi=S$, we have $\cN(n+1,\varphi) \leq 2^{\lg|G|+S/n}$.
\end{lemma}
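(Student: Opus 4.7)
The plan is a short counting argument that converts the label constraint into an upper bound on the probability mass of each matching walk, and then uses that the total probability mass is at most one.

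First I would unpack what it means for the root to carry label $\varphi = S$. By construction,
\[
S \;=\; \sum_{i=0}^{n-1} \lceil n \lg_2 \deg(v_i)\rceil \;\ge\; n \sum_{i=0}^{n-1} \lg_2 \deg(v_i),
\]
so every walk $(v_0,\ldots,v_n)$ contributing to $\cN(n+1,\varphi)$ satisfies $\sum_{i=0}^{n-1} \lg_2 \deg(v_i) \le S/n$, i.e.\ $\prod_{i=0}^{n-1} \deg(v_i) \le 2^{S/n}$.

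Next I would interpret this probabilistically, exactly in the spirit of the point-wise benchmark. Consider the random walk $\mu_n$ that chooses $v_0$ uniformly from $G$ and then at each step moves to a uniformly random out-neighbor; as noted in the paper, this is the natural distribution for which the point-wise benchmark is tight. Each specific length-$n$ walk $(v_0,\ldots,v_n)$ has probability
\[
\Pr_{\mu_n}[(v_0,\ldots,v_n)] \;=\; \frac{1}{|G|} \prod_{i=0}^{n-1} \frac{1}{\deg(v_i)}.
\]
Combining with the previous inequality, every walk counted by $\cN(n+1,\varphi)$ has probability at least $1/(|G|\cdot 2^{S/n})$ under $\mu_n$.

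Finally, since these walks are disjoint events whose probabilities sum to at most one,
\[
\cN(n+1,\varphi) \cdot \frac{1}{|G|\cdot 2^{S/n}} \;\le\; 1,
\]
which rearranges to the stated bound $\cN(n+1,\varphi)\le 2^{\lg|G| + S/n}$. I do not anticipate a real obstacle here: the only mild subtlety is checking that $\mu_n$ is a genuine probability distribution, which holds because every vertex in $G$ has out-degree at least one (otherwise no length-$n$ walk exists through it). The whole argument is essentially the usual Shannon/Kraft inequality applied to the point-wise empirical entropy of the walk.
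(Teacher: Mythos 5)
Your proof is correct, but it takes a different route from the paper's. The paper argues via Shannon entropy: it takes $V$ uniform over the $\cN(n+1,\varphi)$ walks with root label $S$, writes $\lg\cN(n+1,\varphi)=H(V)$, and bounds $H(V)\leq H(V_0)+\sum_{i}H(V_{i+1}\mid V_i)\leq \lg|G|+\E\bigl[\sum_i\lg\deg(V_i)\bigr]\leq \lg|G|+S/n$ using the chain rule and the fact that conditioning reduces entropy. You instead run the dual, Kraft-style counting argument: every walk with root label $S$ has $\sum_i\lg\deg(v_i)\leq S/n$ and hence $\mu_n$-probability at least $2^{-\lg|G|-S/n}$, so disjointness forces $\cN(n+1,\varphi)\leq 2^{\lg|G|+S/n}$. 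The two arguments hinge on the same identity --- the point-wise benchmark is exactly $\lg(1/p)$ under the natural random-walk measure $\mu_n$ --- but yours is more elementary (no entropy machinery, just ``probabilities sum to one''), while the paper's phrasing makes the connection to the expected-space discussion of Section 2.2 more transparent. Both proofs share the same implicit convention that $\cN(n+1,\varphi)$ counts only arrays that are genuine walks in $G$ (otherwise your walks would have zero $\mu_n$-mass and the paper's conditional-entropy bound $H(V_{i+1}\mid V_i)\leq\E[\lg\deg(V_i)]$ would also fail), and your closing remark about out-degrees correctly handles the only degenerate case; even if some vertex had out-degree zero, $\mu_n$ would be a sub-probability measure and the inequality would still go the right way.
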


\begin{proof}
Let $V$ be a uniformly random walk $V=(V_0,\ldots, V_n)$ with root label $\varphi$. 
Then $H(V) =  \lg_2\cN(n+1,\varphi).$ 
On the other hand, by the chain rule and the fact that conditioning only reduces entropy, we have 
\[ H(V) = H(V_0,\ldots, V_n) \leq H(V_0)+\sum_{i=0}^{n-1} H(V_{i+1}|V_i)  \leq \lg|G|+\sum_{i=0}^{n-1} \E[\lg \deg(v_i)] \leq \lg|G|+S/n, \] 
where the last inequality is by definition of $\cN(n,\varphi)$.   Rearranging sides completes the proof. 
\end{proof}

Therefore, the space usage is at most
\[
	\lg|G|+S/n+2\leq \lg|G|+\sum_{i=0}^{n-1}\lg\deg(v_i)+3.
\]

In particular, setting $B=2$ gives us query time $O(\lg n)$ and lookup table size $\tilde{O}(n^6)$.

\section{Lower Bounds} \label{sec_LB}

\subsection{A Succinct Reduction from Dictionaries to Directed Random Walks} \label{subsec_dictionary_vs_RW}

In this section, we exhibit a succinct reduction from the well-studied Dictionary data structure 
problem to the LDSC problem of storing random walks on (directed) non-regular graphs. 
Before we describe the reduction, we begin with several definitions. Recall that the zeroth-order 
\emph{empirical entropy} of a string $x\in \Sigma^n$ is $H_0(x) := \sum_{\sigma \in \Sigma} f_\sigma \lg(n/f_{\sigma}(x))$ 
where $f_\sigma$ is the number of occurrences (frequency) of the symbol $\sigma$ in $x$. 
The \emph{succinct Dictionary} problem is defined as follows: 

\begin{definition}[Succinct Dictionaries]\label{def_Dictionary}
Preprocess an $n$-letter string $x\in \Sigma^n$ into $H_0(x) + r$ bits of space, 
such that each $x_i$ can be retrieved in time $t$. We denote this problem $\Dic$. 
\end{definition} 

Note that in the special case of bit-strings ($\Sigma = \{0,1\}$), this is the classic 
\emph{Membership} problem \cite{Pagh02, gal:succinct, Pat08}.  
The best known time-space tradeoff for $\Dic$ (for constant-size alphabets)  
is $r = O(n/(\frac{\lg n}{t})^t)$ \cite{Pat08}.\footnote{On a RAM, there is an extra $n^{1-\eps}$ additive 
term for storing lookup-tables whose ``amortized" cost is small, see Theorem 1 in \cite{Pat08}.} 
While this exponential tradeoff is known to be optimal in the bit-probe model ($w=1$) \cite{Vio09}, 
no cell-probe lower bounds are known, and this is one of the long-standing open problems 
in  the field of succinct data structures. \\

A key component in our reduction will be the \emph{Huffman tree} (c.f, Huffman code) of a distribution.

\begin{definition}[Huffman Tree]\label{def_huff_tree}
The \emph{Huffman Tree} $\cT_\mu$ of a discrete distribution $\mu$ supported on alphabet $\Sigma$, 
is a rooted binary tree (not necessarily complete) with $\Sigma$ leaves, one per alphabet symbol.  
For every symbol $\sigma \in \Sigma$, the unique root-to-leaf path to the symbol $\sigma$ is of 
length exactly $\ell_\sigma := \lceil \lg_2(1/\mu(\sigma)) \rceil$. 
\end{definition} 

We remark that the existence of such trees for any discrete distribution $\mu$ is guaranteed by Kraft's 
inequality (for more context and construction of the Huffman tree, see \cite{TC06}). 
We are now ready to prove Theorem \ref{thm_LB_informal}, which we restate below. 

\begin{theorem}\label{thm_Dic_to_nonreg_RW} 
Let  $D$ be a succinct cell-probe data structure for storing a walk $(v_1,\ldots,v_n)$ over general (directed) 
graphs $G$, using $\sum_i \lg(deg(v_i)) + r$ bits of space, and query time $t$ for each $v_i$. 
Then for any constant-size alphabet $\Sigma$, there is a succinct dictionary storing $x\in \Sigma^n$, 
with space $H_0(x)  + r$ bits and query time query time $t +1$. This reduction holds for any input $x$ with 
empirical frequencies ($f_{\sigma}(x)/n$) which are inverses of powers of $2$ lower bounded by a constant.  
\end{theorem}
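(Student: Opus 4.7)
The plan is to reduce dictionary queries on $x\in\Sigma^n$ to random-walk queries on a constant-size directed graph $G_\Sigma$ whose edges encode the Huffman code for the empirical distribution $\mu(\sigma):=f_\sigma(x)/n$. By hypothesis each $\mu(\sigma)=2^{-\ell_\sigma}$ with $\ell_\sigma$ bounded by an absolute constant $L:=\max_\sigma \ell_\sigma=O(1)$, so the Huffman tree $\cT_\mu$ exists (by Kraft) and has $O(|\Sigma|)=O(1)$ nodes, each internal node having out-degree $2$ and each leaf $\sigma$ sitting at depth exactly $\ell_\sigma$. The core idea is to \emph{pad} every root-to-leaf branch of $\cT_\mu$ to a common length $L+1$: for every leaf $\sigma$ at depth $\ell_\sigma<L$, attach a dedicated directed chain of $L-\ell_\sigma$ fresh ``dummy'' vertices (each of out-degree $1$), and from the last vertex of every $\sigma$-branch (the leaf $\sigma$ itself if $\ell_\sigma=L$, otherwise the last dummy) install an edge back to the root. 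The resulting graph $G_\Sigma$ depends only on the empirical frequencies and has constant size, so it is part of the preprocessed code-book, not of the output space.

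Given $x=(x_1,\dots,x_n)$, construct the walk $W:=(v_0,v_1,\dots,v_N)$ on $G_\Sigma$ as the concatenation of $n$ identically-shaped ``blocks,'' where block $i$ traverses the Huffman path from root to $\sigma=x_i$ (this is $\ell_{x_i}$ edges), then walks down the $L-\ell_{x_i}$ dummies on that branch, and finally takes the single edge back to root; thus every block visits exactly $L+1$ vertices and $N=n(L+1)$. Apply the walk data structure $D$ to $W$; this uses $\sum_{j=0}^{N-1}\lg\deg(v_j)+r$ bits. Inside the $i$-th block, only the root and the $\ell_{x_i}-1$ internal Huffman vertices en route to $\sigma=x_i$ have out-degree $2$, contributing $\ell_{x_i}\cdot\lg 2=\ell_{x_i}$ bits, while the leaf and every dummy have out-degree $1$ and contribute $0$ bits. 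Summing over blocks gives
\[
\sum_{j=0}^{N-1}\lg\deg(v_j)\;=\;\sum_{i=1}^n \ell_{x_i}\;=\;\sum_{\sigma}f_\sigma(x)\lg\bigl(n/f_\sigma(x)\bigr)\;=\;H_0(x),
\]
so the total dictionary space is $H_0(x)+r$ as required.

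For the query: on input $i\in[n]$, compute the position $j_i:=(i-1)(L+1)+L=iL+i-1$ in $O(1)$ time, issue a single walk-query to $D$ to obtain $v_{j_i}$, and return the unique symbol $\sigma\in\Sigma$ such that $v_{j_i}$ lies on the $\sigma$-branch of $\cT_\mu$ (look-up in the $O(1)$-size table for $G_\Sigma$). By construction, $v_{j_i}$ is either the leaf $x_i$ (when $\ell_{x_i}=L$) or the last dummy attached to leaf $x_i$, and these $|\Sigma|$ vertices are pairwise distinct, so the map $v_{j_i}\mapsto x_i$ is well-defined. Hence the dictionary query time is $t+1$.

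The main obstacle to worry about is precisely the one the padding is designed to resolve: to reduce Dictionary to a walk query we must be able to locate $x_i$ from $i$ alone, without knowing $x_1,\dots,x_{i-1}$, yet a bare Huffman walk has blocks of variable length $\ell_{x_i}+1$, making the positions of subsequent symbols input-dependent. Padding every branch to the common length $L+1$ fixes block boundaries at the deterministic positions $\{0,L+1,2(L+1),\dots\}$ at the cost of adding only degree-$1$ vertices, which preserves the bit count exactly (no redundancy incurred beyond the $r$ inherited from $D$). A secondary detail is to verify that $G_\Sigma$ has constant size and that each symbol is uniquely determined by the last vertex of its block; both follow because $|\Sigma|=O(1)$ and $L=O(1)$ under the hypothesis that frequencies are bounded away from $0$, and because the dummies of distinct leaves are disjoint by construction.
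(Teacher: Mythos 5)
Your proposal is correct and follows essentially the same route as the paper: build the Huffman tree of the empirical distribution, pad each leaf's branch with a vertex-disjoint directed return path so that every block has fixed length $L+1$, observe that the padding vertices have out-degree $1$ and hence contribute nothing to $\sum_j \lg\deg(v_j)$, and decode $x_i$ by a single query to the deterministic last position of block $i$. Your point-wise degree-counting argument for the space bound is if anything slightly cleaner than the paper's entropy-based phrasing, but the construction and query procedure are the same.
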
 

\begin{proof}
First, we claim that the Dictionary problem can be reformulated as follows: Store an $n$-letter string 
drawn from some arbitrary \emph{product} distribution $X \sim \mu^n$,  
using expected space $s=nH(\mu) + r$ bits, decoding $X_i$ in time $t$, where $H(\mu)$ is the Shannon entropy of $\mu$. 
Indeed, let $\mu=\mu(x)$ denote the empirical distribution of the input string $x\in \Sigma^n$ to $\Dic$. 
Since the zeroth-order space benchmark $H_0(x)$ only depends on the \emph{marginal} frequencies of symbols in the 
input string, 
we may assume that each coordinate $X_i$ is drawn \emph{independently} from 
the induced distribution $\mu$ (which is arbitrary nonuniform), and $H_0(x) = H(\mu^n) = nH(\mu)$ holds. 

By this reformulation, we may assume the input to $\Dic$ is $X\sim \mu^n$ for some $\mu$. We now define a (directed) graph 
such that a uniform random walk on this graph encodes the input string $X_1,\ldots,X_n \sim \mu^n$  
\emph{without losing entropy}. Let $\cT_\mu$ be the Huffman Tree of the distribution $\mu$, where all 
edges are directed downwards from root to leaves. By Definition \ref{def_huff_tree}, the probability 
that a random walk $W$ on $\cT_\mu$ starting from the root reaches leaf $\sigma\in \Sigma$ is precisely 
$2^{-\ell_\sigma} = 2^{-\lceil \lg_2(1/\mu(\sigma)) \rceil}$. If all probabilities in $\mu$ are (inverses of) powers 
of $2$, then the probability of this event is exactly 
\begin{align}\label{eq_t_mu_prob}
\Pr[W \text{reaches leaf } \sigma] =  2^{-\lg_2(1/\mu(\sigma))} = \mu(\sigma). 
\end{align}
In other words, under the assumption that all probabilities are powers of $2$, sampling $X_i\sim\mu$ is equivalent 
to sampling a random leaf of $\cT_\mu$ according to a uniform random walk starting from the root $v_r$. 
Let $$d_\mu := \max_\sigma \ell_\sigma$$ be the height of the tree $\cT_\mu$ (i.e., the length of the deepest path 
$\max_\sigma \lg(1/\mu(\sigma))$). To complete the construction of the graph, we connect each leaf (corresponding to symbol) 
$\sigma$ back to the root of $\cT_\mu$ via a vertex-disjoint \emph{directed path} of length $d_\mu + 1 - \ell_\sigma$.  
Call the resulting graph $G_\mu$. 
This simple trick makes sure that all ``cycles" have \emph{fixed} length $d_\mu+1$, hence each sampled leaf 
in the walk can be decoded from a fixed block, despite the fact that each leaf is sampled at unpredictable depth.
The key observation is that adding these vertex-disjoint paths does \emph{not} increase the entropy of a 
random walk ($\lg_2(1) = 0$), but merely increases the size of the graph by a constant factor 
$|G_\mu| = O(|\Sigma| d_\mu) = O(1)$.

Let $(V_1,\ldots, V_{n'})$ be a uniform random walk of length $n':= (d_\mu +1)n$ on $G_\mu$ starting 
from the root. By \eqref{eq_t_mu_prob} and definition of $G_\mu$, it follows that 
the unique leaf $V_{j_i} \in \Sigma$  
sampled in the $i$th ``cycle" of the walk ($j_i \in [(i-1)(d_\mu+1), i (d_\mu+1)]$ ), is distributed precisely as $X_i\sim\mu$.  
In particular, $H(V_{j_1},\ldots, V_{j_n}) = nH(\mu)$, while the (conditional) entropies of all the rest of the 
$V_j$'s are identically $0$ (as all previous vertices $V_{<j_i}$ in the $i$th ``cycle" are determined by $V_{j_i}$ 
by the tree structure, and all remaining vertices $V_{>j_i}$ in the $i$th cycle
correspond to the part of the walk that is a vertex-disjoint \emph{path} and hence the conditional entropy 
of this subwalk is $0$).   
Thus, by the chain rule and the premise of the theorem,  
we conclude that there is a succinct data structure storing a random walk $V_1,\ldots, V_n$ on $G_\mu$
using space $nH(\mu) + r$ bits of space, which by the above reformulation implies the same space 
for storing $x \in \Sigma^n$ up to the $O(|\Sigma|\lg n)$ additive term,  
so long as marginal frequencies ($f_{\sigma}(x)/n$) are inverses of powers of $2$.  

Decoding of $x_i$ is straightforward: Go to the second-before-last vertex $V_{i(d_\mu+1) -1}$ in the $i$th 
cycle (i.e., the one just before coming back to the root); Either $V_{i(d_\mu+1) -1}$ is a leaf of $\cT_\mu$, 
or it belongs to a \emph{unique} path corresponding to some leaf $\ell_\sigma$ (as all back-tracking 
paths are vertex disjoint) hence we know immediately that $X_i = \sigma$.  By the premise of  the theorem, 
the decoding time of each vertex in the walk is $t$, hence so is that of $X_i$. We conclude that the resulting 
succinct data structure is an $(r,t)$-Dictionary, as claimed.


\end{proof}


\subsection{Completeness of $\LDSC$ in the static cell-probe model}  \label{subsec_completeness_LDSC}
In this section, we prove unconditional (cell-probe) lower bounds on the $\LDSC$ problem, 
demonstrating that for some (in fact, most) classes of non-product distributions, this storage problem does 
not admit efficient time-space tradeoffs. 
Our first observation is that the $\LDSC$ problem under an \emph{arbitrary} joint distribution 
$\mu$ is a ``complete" static data structure problem: 

\begin{proposition}[Completeness of $\LDSC$]  \label{prop_completeness_ldsc}
For any prior $\mu$ on $n$ files, $\LDSC^{\mu}_{n,\Sigma}$ is equivalent to some 
static data structure problem $\cP = \cP(\mu)$ with $|\cQ| = n$ queries, on an input 
database $Y$  
of size $\E[|Y|] \leq H_\mu + 1$ bits. Conversely, any static data structure 
problem $\cP$ with $|\cQ|$ queries on a database $x \in \Sigma^n$, 
can be embedded as an $\LDSC$ problem on some $\mu = \mu(\cP)$, 
where $H_\mu(\X) = n$. 
These equivalences hold in the cell-probe model with word size $w \geq \Omega(\lg |\Sigma|)$.
\end{proposition}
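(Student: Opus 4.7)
My plan is to prove the two directions separately, where each is essentially a ``by definition'' reduction, with the main content hidden in the entropy bounds.

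\textbf{Direction 1 (LDSC as a static data structure problem).} Given the LDSC instance with prior $\mu$, I would define the associated static problem $\cP(\mu)$ as follows. The database is a sample $\overline{X}\sim \mu$, pre-compressed into a bitstring $Y$ using an arithmetic (or Shannon--Fano--Elias) coder tuned to $\mu$; this guarantees $\E[|Y|]\leq H_\mu(\overline{X})+1$ bits. The $n$ ``queries'' of $\cP(\mu)$ are simply the $n$ file-retrievals: query $i$ asks to reconstruct $X_i$ from $Y$. Since both problems have the same input, the same query set, and the same oracle (cells of $Y$), any cell-probe solution to one is immediately a cell-probe solution to the other with identical time and space. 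The word-size assumption $w\geq\Omega(\lg|\Sigma|)$ merely ensures that each output $X_i\in\Sigma$ can be returned in $O(1)$ cells.

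\textbf{Direction 2 (Embedding arbitrary data structure problems as $\LDSC$).} Given any static problem $\cP$ with query set $\cQ$ on inputs $x\in\{0,1\}^n$ (binary, for a clean bound), I would define the distribution $\mu=\mu(\cP)$ as the pushforward of the uniform distribution on $\{0,1\}^n$ under the map
\[
x\ \longmapsto\ \overline{X}(x)\ :=\ \bigl(\cP(x,q_1),\cP(x,q_2),\ldots,\cP(x,q_{|\cQ|})\bigr)\in\Sigma^{|\cQ|}.
\]
The $|\cQ|$ files of the $\LDSC$ instance are indexed by queries, and $\mu$ has the crucial property
\[
H_\mu(\overline{X})\ \leq\ H(x)\ =\ n
\]
by the data-processing inequality, since $\overline{X}$ is a deterministic function of $x$. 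A solution to $\LDSC^\mu_{|\cQ|,\Sigma}$ with space $s$ and query time $t$ yields a solution to $\cP$ with the same parameters: on input $x$, the encoder computes $\overline{X}(x)$ and stores it using the $\LDSC$ scheme, and query $q_i$ is answered by retrieving the $i$-th file. General $\Sigma$ works analogously with $H_\mu(\overline{X})\leq n\lg|\Sigma|$, and the word-size condition again ensures single-cell access to symbols.

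\textbf{Where I expect friction.} Both reductions are structurally straightforward; the real substance lies in (i) observing that arithmetic coding achieves the one-bit-off entropy bound pointwise in expectation, so that the ``database size'' of $\cP(\mu)$ is truly $H_\mu+1$, and (ii) verifying that the reduction preserves the cell-probe complexity exactly, which follows because the cell-probe model charges only for memory accesses. The only mild subtlety is bookkeeping of the alphabet: one must handle the case where a single file $X_i\in\Sigma$ spans multiple cells, which is excluded by the stated assumption $w\geq\Omega(\lg|\Sigma|)$. The ``completeness'' interpretation —— that $\LDSC$ captures every static data structure problem with inputs of entropy at most $n$ —— then follows immediately, and is what underlies the hardness consequences (e.g., the $k$-wise independent hard instances) discussed in the surrounding text.
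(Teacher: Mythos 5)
Your proposal is correct and follows essentially the same route as the paper: direction 1 compresses $\overline{X}$ with an entropy coder to form the database $Y$ and lets the queries be the file retrievals, and direction 2 takes $\mu(\cP)$ to be the pushforward of a uniform database under the query-answer map, bounding $H_\mu(\overline{X})$ by data processing. The only cosmetic difference is the choice of coder — the paper uses a Huffman code of the whole tuple $\overline{X}$, which cleanly gives $\E[|Y|]\leq H_\mu+1$, whereas Shannon--Fano--Elias is usually only guaranteed to be within $2$ bits of the entropy, so you should either use Huffman or a Shannon code to get the stated $+1$.
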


\begin{proof}
The converse statement is straightforward: Given any static problem $\cP$
on input $x \in \Sigma^N$ with query set $\cQ$, consider the distribution $\mu(\cP)$ 
of $(q_1(X),q_2(X),\ldots , q_{|\cQ|}(X))$, i.e., of all query answers $(A_1,\ldots ,A_{|\cQ|})$ 
to problem $\cP$ under a uniformly random input database $X \in_R \Sigma^n$. The joint entropy 
of $\mu(\cP)$ is $H(A_1,\ldots ,A_{|\cQ|}) \leq H(X) = N\lg \Sigma$, since 
$X$ determines all query answers. Hence this is a valid LDSC instance on $n = |\cQ|$ 
files, over a joint distribution $\mu(\cP)$ with entropy $H_\mu \leq N\lg \Sigma \leq n$. 
    
For the first direction of the proposition, let $\X := (X_1,\ldots,X_n)$ be the (random variable)  
input files to $\LDSC^{\mu}_{n,\Sigma}$, and let $Y := \Huff(X_1,\ldots,X_n)$ denote the 
\emph{Huffman code} of the random variable $\X$. By the properties of the Huffman code (c.f. Definition \ref{def_huff_tree}), 
$Y = \Huff(\X)$ is \emph{invertible} and has expected length $\E_{\mu}[Y] \leq H_\mu(\X) + 1$ bits.  
This implies that $Y$ is a uniformly distributed random variable with entropy $H(Y)=H(\X)$. 
Now, define the data structure problem $\cP$, in which the answer to the $i$th query is
$\cP(i,Y) := (\Huff^{-1}(Y))_i = X_i$ where the last inequality follows from the fact that 
Huffman coding is lossless and hence $\X$ is determined by \emph{some} deterministic 
function $g(Y)$. While this provides a bound only on the expected size of the input (``database") 
$Y$, a standard Markov argument can be applied for bounding the worst-case size, if one 
is willing to tolerate an arbitrarily small failure probability of the data structure over the distribution $\mu$.

\end{proof}

\begin{corollary}[LDSC Lower Bounds] \label{cor_LDSC_LBs}
Proposition \ref{prop_completeness_ldsc}  has the following lower bound implications: 
\begin{enumerate}   
\item ($k$-wise independent distributions)      
Using the cell-sampling arguments of \cite{Siegel04, Larsen12a},   
the above reduction implies that any linear-space data-structure (storing $O(H_\mu)$ bits) 
for \probarg{\LDSC^{\mu}}{n,\Sigma} 
when \emph{$\mu$ is an $(H_\mu)$-wise independent} distribution, 
requires $$t\geq \Omega(\lg H_\mu)= \Omega(\lg n)$$ decoding time even in the cell-probe model (note that this type of tradeoff 
is the highest explicit lower bound known for any static problem). 
\item A rather simple counting argument (\cite{Milt93}) implies that for \emph{most} joint distributions  $\mu$ with 
entropy $H(\mu) := h \ll n$, locally decodable source coding is impossible, in the sense that decoding requires 
$h^{1-o(1)}$ time unless trivial $\approx n^{1-o(1)}$ space is used (and this is tight by Huffman coding). 
Such implicit hard distribution $\mu$ can be defined by $n$ random functions on a random $h$-bit string, where $h = n^\eps$. 
\end{enumerate}
\end{corollary}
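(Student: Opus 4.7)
The plan is to leverage Proposition~\ref{prop_completeness_ldsc} as a universal reduction, casting $\LDSC^{\mu}_{n,\Sigma}$ as a generic static cell-probe problem $\cP(\mu)$ with $n$ queries on a database of expected size $H_\mu+1$ bits, and then to invoke off-the-shelf cell-probe lower bounds on the reduced problem. The two parts of the corollary correspond to two choices of $\mu$ and two very different lower-bound techniques applied to $\cP(\mu)$.

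For Part 1, I would fix $\mu$ to be a $k$-wise independent distribution with $k = H_\mu$ (for instance, the uniform distribution on the kernel of a random $k\times n$ parity-check matrix over $\mathbb{F}_2$), so that the answers on any $k$-subset of queries in $\cP(\mu)$ are jointly uniform over $\Sigma^k$. I would then apply the Siegel--Larsen cell-sampling lemma to $\cP(\mu)$: if the data structure uses $s/w$ cells of $w$ bits and at most $t$ probes per query, sampling each cell independently with probability $p$ yields $\Omega(n p^t)$ queries whose entire probe-set falls inside the sample. As long as $n p^t \lesssim k$, the $k$-wise independence forces these answers to be jointly uniform, so the $\Omega(n p^t \log|\Sigma|)$ bits of fresh entropy must be carried by the sampled cells, giving the information-theoretic inequality $psw \gtrsim n p^t \log|\Sigma|$. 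Substituting $s = O(H_\mu)$ bits and optimizing $p$ so that $n p^t \approx k$ yields $t \geq \Omega(\log(n/H_\mu))$, which in the relevant regime $H_\mu = n^{\Theta(1)}$ matches $\Omega(\log H_\mu) = \Omega(\log n)$.

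For Part 2, I would run a counting (incompressibility) argument in the style of Miltersen~\cite{Milt93}. Consider the family $\cF_h$ of priors defined by choosing $n$ independent uniformly random functions $f_1,\ldots,f_n : \{0,1\}^h \to \Sigma$, and letting $\mu_{\vec f}$ be the law of $(f_1(Y),\ldots,f_n(Y))$ under uniform $Y\in\{0,1\}^h$; each such $\mu$ has entropy at most $h$, and $|\cF_h| = |\Sigma|^{n\cdot 2^h}$. On the other side, any cell-probe data structure of size $s \leq n^{1-\Omega(1)}$ bits and query time $t \leq h^{1-o(1)}$ is specified by an encoding map together with $n$ decoders, each depending on only $t$ cells; a union bound over these (encoding, decoder-tuple) pairs, applied simultaneously across all $2^h$ possible values of $Y$, produces a count strictly smaller than $|\cF_h|$. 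By the probabilistic method almost all $\mu\in\cF_h$ admit no such data structure. Tightness of the upper range follows because Huffman coding trivially achieves space $\approx H_\mu$ at the price of decoding time $O(H_\mu) = O(h)$ by unrolling the codeword.

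The main obstacle I anticipate is in Part 1: Proposition~\ref{prop_completeness_ldsc} only bounds the database length $|Y| = |\Huff(\overline{X})|$ in \emph{expectation}, whereas cell-sampling lower bounds implicitly assume a fixed-size database. I would handle this via a Markov-style truncation, trading a subconstant failure probability under $\mu$ for a worst-case bound of, say, $2H_\mu$ bits, and then verify that the $k$-wise independence of the coordinates of $\overline{X}$ survives the truncation (since only a small $\mu$-measure of inputs is discarded, and the cell-sampling argument looks only at $k$ answers at a time). Part 2 is essentially folklore, but requires some care in counting decoder descriptions compactly enough that the final exponent matches $h^{1-o(1)}$ rather than a weaker polynomial, and in aligning the ``most-$\mu$'' quantifier in the statement with the union bound over data structures.
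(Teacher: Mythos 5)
Your proposal is correct and follows essentially the same route as the paper, which states this corollary by pointing to exactly the two techniques you develop: cell sampling \`a la Siegel/Larsen applied through the reduction of Proposition~\ref{prop_completeness_ldsc} for Part~1 (with the same Markov truncation to convert expected space into worst-case space), and a Miltersen-style counting argument over the family of priors induced by $n$ random functions of an $h$-bit seed for Part~2. The one detail to fix is your illustrative example in Part~1: the kernel of a random $k\times n$ parity-check matrix over $\mathbb{F}_2$ has entropy $n-k$ and is not $(H_\mu)$-wise independent in the regime you need; instantiate instead with evaluations of a random degree-$(k-1)$ polynomial over a field of size $\geq n$ (Reed--Solomon), which is exactly $k$-wise independent with total entropy $k\lg q$.
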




\bibliography{walk_refs.bib}
\bibliographystyle{alpha}

\end{document}